\newcommand{\R}{\mathbb{R}}
\renewcommand{\H}{\EuScript{H}}
\newcommand{\Nyst}{\textsc{Nystr\"{o}m}\xspace}
\newcommand{\RFM}{\textsc{RFFMaps}\xspace}
\newlength{\figsize} \setlength{\figsize}{0.31\textwidth}
\title{Streaming Kernel Principal Component Analysis}
\author{
Mina Ghashami\footnotemark[1]\footnote{These authors contributed equally to the paper.}\\
School of Computing\\
University of Utah\\
\texttt{ghashami@cs.utah.edu} \\
\and
Daniel Perry\footnotemark[1] \\
School of Computing\\
University of Utah\\
\texttt{dperry@cs.utah.edu} \\
\and
Jeff M. Phillips \\
School of Computing\\
University of Utah\\
\texttt{jeffp@cs.utah.edu} \\
}
\date\nonumber
\begin{document}

\maketitle

\begin{abstract}
Kernel principal component analysis (KPCA) provides a concise set of basis vectors which capture non-linear structures within large data sets, and is a central tool in data analysis and learning.  To allow for non-linear relations, typically a full $n \times n$ kernel matrix is constructed over $n$ data points, but this requires too much space and time for large values of $n$.  Techniques such as the Nystr\"{o}m method and random feature maps can help towards this goal, but they do not explicitly maintain the basis vectors in a stream and take more space than desired.
We propose a new approach for streaming KPCA which maintains a small set of basis elements in a stream, requiring space only logarithmic in $n$, and also improves the dependence on the error parameter.  Our technique combines together random feature maps with recent advances in matrix sketching, it has guaranteed spectral norm error bounds with respect to the original kernel matrix, and it compares favorably in practice to state-of-the-art approaches. 
\end{abstract}

\section{Introduction}
\label{sec:intro}

Principal component analysis (PCA) is a well-known technique for dimensionality reduction, and has many applications including visualization, pattern recognition, and data compression \cite{jolliffe2005principal}.   
Given a set of centered $d$-dimensional (training) data points $A = [a_1;\ldots; a_n] \in \R^{n \times d}$, PCA diagonalizes the covariance matrix $C = \frac{1}{n}A^TA$ by solving the eigenvalue equation $C v = \lambda v$.
However, when the data points lie on a highly nonlinear space, PCA fails to concisely capture the structure of data.
To overcome this, several nonlinear extension of PCA have been proposed, in particular Kernel Principal Component Analysis (KPCA)~\cite{scholkopf1997kernel}.
The basic idea of KPCA is to implicitly map the data into a nonlinear feature space of high (or often infinite) dimension and perform PCA in that space~\cite{scholkopf1997kernel}. The nonlinear map is often denoted as $\phi:\R^d \rightarrow \H$ where $\H$ is a Reproducing Kernel Hilbert Space (RKHS). While direct computation of PCA in RKHS is infeasible, we can invoke the so called \textit{kernel trick} which exploits the fact that PCA interacts with data through only pair-wise inner products.  That is $\langle \phi(x), \phi(y)\rangle_{\H} = K(x,y)$, for all $x,y\in \R^d$ for a kernel function $K$; we represent this as the $n \times n$ gram matrix $G$.  
However, KPCA suffers from high space and computational complexity in storing the entire kernel (gram) matrix $G \in \R^{n \times n}$ and in computing the decomposition of it in the training phase.  Then in the testing phase it spends $O(nd)$ time to evaluate the kernel function for any arbitrary test vector with respect to all training examples.
Although one can use low rank decomposition approaches~\cite{drineas2006fast2,s06,l13,gp14} to reduce the computational cost to some extent, KPCA still needs to compute and store the kernel matrix.

There have been two main approaches towards resolving this space issue.  
First approach is the Nystr\"{o}m \cite{williams2001using} which uses a sample of the data points to construct a much smaller gram matrix.  
Second approach is using feature maps \cite{rahimi2007random} which provide an approximate but explicit embedding of the RKHS into Euclidean space.  
As we describe later, both approaches can be made to operate in a stream, approximating the KPCA result in less than $O(n^2)$ time and space.  
%
Once these approximations are formed, they reveal a $D \ll n$ dimensional space, and typically a $k$-dimensional subspace found through linear PCA in $\R^D$, which captures most of the data (e.g., a low rank-$k$ approximation).  
There are two main purposes of these $D$- and $k$-dimensional subspaces; they start with mapping a data point $x \in \R^d$ into the $D$-dimensional space, and then often onto the $k$-dimensional subspace. If $x$ is one of the training data points, then the $k$-dimensional representation can be used as a concise ``loadings'' vector.  It can be used in various down-stream training and learning tasks wherein this $k$-dimensional space, can assume linear relations (e.g., linear separators, clustering under Euclidean distance) since the non-linearity will have already been represented through the mapping to this space.  
If $x$ is not in the training set, and the training set represents some underlying distribution, then we can assess the ``fit'' of $x$ to this distribution by considering the residual of its representation in the $D$-dimensional space when projected to the $k$-dimensional space.  

We let \textsc{Test time} refer to this time for mapping a single point $x$ to the $D$-dimensional and $k$-dimensional spaces.  The value of $k$ needed to get a good fit depends on the choice of kernel and its fit to the data; but $D$ depends on the technique.  For instance in (regular) KPCA $D=n$, in Nystr\"{o}m $D = O(1/\eps^2)$, when using random feature maps with \cite{rahimi2007random} $D = O((1/\eps^2)\log n)$, where $\eps \in (0,1)$ is the error parameter.  
We propose a new streaming approach, named as SKPCA, that will only require $D = O(1/\eps)$.

We prove bounds and show empirically that SKPCA greatly outperforms existing techniques in \textsc{Test time}, and is also comparable or better in other measures of \textsc{Space} (the cost of storing this map, and space needed to construct it), and \textsc{Train time} (the time needed to construct the map to the $D$-dimensional and $k$-dimensional spaces).  

\paragraph{Background and Notation.}
We indicate matrix $A$ is $n \times d$ dimensional as $A \in \R^{n \times d}$.   
Matrices $A$ and $Z$ will be indexed by their row vectors $A = [a_1; a_2; \ldots, a_n]$ while other matrices $V, U, W, \ldots$ will be indexed by column vectors $V = [v_1, v_2, \ldots, v_d]$.  
We use $I_n$ for the $n$-dimensional identity matrix and $0^{n \times d}$ as the full zero matrix of dimension $n \times d$.
The Frobenius norm of a matrix $A$ is $\|A\|_F = \sqrt{\sum_{i=1} \|a_i\|^2}$ and the spectral norm is $\|A\|_2 = \sup_{x \in \R^d} \frac{\|Ax\|}{\|x\|}$. 
We denote transpose of a matrix as $A^T$.
The singular value decomposition of matrix $A \in \mathbb{R}^{n \times d}$ is denoted by $[U,S,V] = \svd(A)$. 
If $n \ge d$ it guarantees that $A = U S V^T$, $U^TU = I_n$, $V^TV = I_d$, $U\in \R^{n \times n}$, $V\in \R^{d \times d}$,
and $S = \diag(s_1, s_2, \ldots, s_d) \in \R^{n \times d}$ is a diagonal matrix with $s_1 \ge s_2 \ge \ldots \ge s_d \ge 0$. 
Let $U_k$ and $V_k$ be matrices containing the first $k$ columns of $U$ and $V$, respectively, and $S_k =\diag(s_1, s_2, \ldots, s_k) \in \R^{k \times k}$.
The matrix $A_k = U_k S_k V_k^T$ is the best rank $k$ approximation of $A$ in the sense that $A_k = {\arg \min}_{C : \rank(C) \leq k} \|A - C\|_{2,F}$.  
We denote by $\pi_B(A)$ the projection of rows of $A$ on the span of the rows of $B$. In other words, $\pi_B(A) = A B^\dagger B$ where $(\cdot)^\dagger$ indicates taking the Moore-Penrose psuedoinverse.
Finally, expected value of a matrix is defined as the matrix of expected values.

\subsection{Related Work}

\paragraph{Matrix Sketching.}
Among many recent advancements in matrix sketching~\cite{Woo14,Mah11}, we focus on those that compress a $n \times d$ matrix $A$ into an $\ell \times d$ matrix $B$.  
There are several classes of algorithms based on row/column sampling \cite{drineas2006fast2,BMD09} (very related to Nystr\"{o}m approaches \cite{drineas2005nystrom}), random projection \cite{s06} or hashing \cite{clarkson2013low} which require $\ell \approx \textsf{c}/\eps^2$ to achieve $\eps$ error.  The constant $\textsf{c}$ depends on the algorithm, specific type of approximation, and whether it is a ``for each'' or ``for all'' approximation.  
A recent and different approach, Frequent Directions (FD)~\cite{l13}, uses only $\ell=2/\eps$ to achieve the error bound $\|A^T A - B^T B\|_2 \leq \eps \|A\|_F^2$, and runs in time $O(nd/\eps)$.  
We use a modified version of this algorithm in our proposed approach.  

\paragraph{Incremental Kernel PCA.}
Techniques of this group update/augment the eigenspace of kernel PCA without storing all training data.
\cite{kimura2005incremental} adapted incremental PCA~\cite{hall1998incremental} to maintain a set of linearly independent training data points and compute top $d$ eigenvectors such that they preserve a $\theta$-fraction (for a threshold $\theta \in (0,1)$) of the total energy of the eigenspace. 
However this method suffers from two major drawbacks.  First, the set of linearly independent data points can grow large and unpredictably, perhaps exceeding the capacity of the memory. Second, under adversarial (or structured sparse) data, intermediate approximations of the eigenspace can compound in error, giving bad performance~\cite{GDP14}.  
%
Some of these issues can be addressed using online regret analysis assuming incoming data is drawn iid (e.g., ~\cite{kuzmin2007online}).
However, in the adversarial settings we consider, FD~\cite{l13} can be seen as the right way to formally address these issues.  

\paragraph{Nystr\"{o}m-Based Methods for Kernel PCA.}
Another group of methods~\cite{williams2001using,drineas2005nystrom,gittens2013revisiting,kumar2012sampling,talwalkar2010matrix}, known as \Nyst, approximate the kernel (gram) matrix $G$ with a low-rank matrix $\hat G$, by sampling columns of $G$.
The original version~\cite{williams2001using} samples $c$ columns with replacement as $C$ and estimates $\hat G = C W^{-1} C^T$, where $W$ is the intersection of the sampled columns and rows; this method takes $O(nc^2)$ time and is not streaming.  
Later \cite{drineas2005nystrom} used sampling with replacement and approximated $G$ as $\hat G = C W_k^{\dagger} C^T$. They proved if sampling probabilities are of form $p_i = G_{ii}^2/\sum_{i=1}^n G_{ii}^2$, then for $\eps\in (0,1)$ and $\delta\in(0,1)$, a Frobenius error bound
$\|G - \bar G_k\|_F \leq \|G - G_k\|_F + \eps \sum_{i=1}^n G_{ii}^2$
holds with probability $1-\delta$ for $c = O((k/\eps^4)\log(1/\delta))$, and a spectral error bound
$\|G - \bar G_k\|_2 \leq \|G - G_k\|_2 + \eps \sum_{i=1}^n G_{ii}^2$
holds with probability $1-\delta$ for $c = O((1/\eps^2)\log(1/\delta))$ samples.
There exist conditional improvements, e.g., \cite{gittens2013revisiting} shows with $c = O(\mu \frac{k \ln (k/\delta)}{\eps^2})$ where $\mu$ denotes the coherence of the top $k$-dimensional eigenspace of $G$, that $\|G - \bar G_k\|_2 \leq (1+\frac{n}{(1-\eps)c})\|G - G_k\|_2 $.

\begin{table*}[t!]
\caption{Asymptotic \textsc{Train time}, \textsc{Test time} and \textsc{Space} for \SKPCA, \textsc{KPCA}~\cite{scholkopf1997kernel} , \textsc{RNCA}~\cite{lopez2014randomized}, and \Nyst~\cite{drineas2005nystrom} to achieve $\|G' - G\|_2 \leq \eps n$ with high probability (KCPA is exact) and with Gaussian kernels.}

\vspace{-.1in}
 
\label{tbl:compare}
\begin{center}
\begin{small}
\begin{sc}
\begin{tabular}{r|c|c|c}
\hline
& Train time & Test time & Space \\
\hline 
KPCA 
 & $O(n^2d + n^3)$ & $O(n^2 + nd)$ & $O(n^2 + nd)$ \\
Nystr\"{o}m 
 & $O((n \log n)/\eps^2 + (d \log^3 n)/\eps^4 + (\log^3 n)/\eps^6) $ 
 & $O((d \log n)/\eps^2 + (\log^2 n)/\eps^4)$ 
 & $O((d \log n)/\eps^2 + (\log^2 n)/\eps^4)$\\ 
RNCA 
 & $O((n d\log n)/\eps^2 + (n\log^2 n)/\eps^4 + (\log^3 n)/\eps^6)$ 
 & $O((d\log n)/\eps^2 + (k\log n)/\eps^2)$ 
 & $O((d\log n)/\eps^2 + (\log^2 n)/\eps^4)$ \\
\hline
SKPCA 
& $O((n d \log n)/\eps^2 + (n \log n)/\eps^3)$ 
& $O((d \log n)/\eps^2 + (k\log n)/\eps^2)$ 
& $O((d\log n)/\eps^2 + (\log n)/\eps^3)$ \\
\hline
\end{tabular} 
\end{sc}
\end{small}
\end{center}
\vspace{-.2in}
\end{table*}

\paragraph{Random Fourier Features for Kernel PCA.}
In this line of work, the kernel matrix is approximated via randomized feature maps. 
The seminal work of \cite{rahimi2007random} showed one can construct randomized feature maps $Z: \R^d \to \R^m$ such that for any shift-invariant kernel $K(x,y) = K(x-y)$ and all $x,y\in\R^d$,  $\E[\langle Z(x),Z(y) \rangle] = K(x,y)$ and if $m = O((d/\eps^2) \log (n/\delta))$, then with probability at least $1-\delta$,  
$\left|\langle Z(x),Z(y) \rangle - K(x,y)\right| \leq \eps$.
Using this mapping, instead of \emph{implicitly} lifting data points to $\H$ by the kernel trick, they \emph{explicitly} embed the data to a low-dimensional Euclidean inner product space.
Subsequent works generalized to other kernel functions such as group invariant kernels~\cite{li2010random}, min/intersection kernels~\cite{maji2009max}, dot-product kernels~\cite{kar2012random}, and polynomial kernels~\cite{hamid2013compact,ANW14}. 
This essentially converts kernel PCA to linear PCA. 
In particular, Lopez \etal~\cite{lopez2014randomized} proposed \textsc{randomized nonlinear PCA (RNCA)}, which is an exact linear PCA on the approximate data feature maps matrix $Z\in \R^{n \times m}$.  
They showed the approximation error is bounded as $\E [\|\hat G - G\|_2] \leq \Theta((n \log{n})/m)$, where $\hat G = Z Z^T$ is not actually constructed.

\subsection{Our Result vs. Previous Streaming}

In this paper, we present a streaming algorithm for computing kernel PCA where the kernel is any shift-invariant function $K(x,y) = K(x-y)$. We refer to our algorithm as \SKPCA (Streaming Kernel PCA) throughout the paper.  Transforming the data to a $m$-dimensional random Fourier feature space $Z\in \R^{n\times m}$ (for $m \ll n$) and maintaining an approximate $\ell$ dimensional subspace $W\in \R^{m\times \ell}$ ($\ell \ll m$), we are able to show that for $\tilde G = Z W W^T Z^T$, the bound $\|G - \tilde G\|_2 \leq \eps n$ holds with high probability.  
Our algorithm requires $O(dm + m\ell)$ space for storing $m$ feature functions and the $\ell$-dimensional eigenspace $W$. 
Moreover \SKPCA needs $O(dm + ndm + nm\ell) = O(nm(d + \ell))$ time to compute $W$, and permits $O(dm + m\ell)$ test time to first transfer the data point to $m$-dimensional feature space and then update the eigenspace $W$.  

We compare with two streaming algorithms; \textsc{RNCA}\cite{lopez2014randomized} and  \Nyst~\cite{drineas2005nystrom}.

\textsc{RNCA}\cite{lopez2014randomized} achieves $\E [\|\hat G - G\|_2] \leq \Theta((n \log{n})/m)$, for $\hat G = Z Z^T$ and $Z\in \R^{n \times m}$ being the data feature map matrix. Using Markov's inequality it is easy to show that with any constant probability $\|\hat G - G\|_2 \leq \eps n$ if $m  = O((\log n)/\eps)$.   We extend this (in Lemma \ref{lem:HiltoRR}) to show with $m = O((\log n)/\eps^2)$ then $\|\hat G - G\|_2 \leq \eps n$ with high probability $1-1/n$. 
This algorithm takes $O(dm + nmd + nm^2) = O(nmd + nm^2)$ time to construct $m$ feature functions, apply them to $n$ data points and compute the $m\times m$ covariance matrix (adding $n$ $m \times m$ outer products). Moreover, it takes $O(dm + m^2)$ space to store the feature functions and covariance matrix. Testing on a new data point $x \in \R^d$ is done by applying the $m$ feature functions on $x$, and projecting to the rank-$k$ eigenspace in $O(dm+ mk)$.

\Nyst~\cite{drineas2005nystrom} approximates the original Gram matrix with $\bar G = C W_k^{\dagger} C^T$.
For shift-invariant kernels, the sampling probabilities are $p_i = G_{ii}^2/\sum_{i=1}^n G_{ii}^2 = 1/n$, hence one can construct $W\in\R^{c\times c}$ in a stream using $c$ independent reservoir samplers.  
Note setting $k = n$ (hence $G_k = G$), their spectral error bound translates to
$\|G - \bar G\|_2 \leq \eps n$ for $c = O((1/\eps^2)\log(1/\delta))$.  
Their algorithm requires $O(nc + dc^2\log n)$ time to do the sampling and construct  $W$. It also needs $O(cd + c^2)$ space for storing the samples and $W$.  
The test time step on a point $x \in \R^d$ evaluates $K(x,y)$ on each data point $y$ sampled, taking $O(cd)$ time, 
and projects onto the $c$-dimensional and $k$-dimensional basis in $O(c^2 + ck)$ time; requiring $O(cd + c^2)$ time.  

For both RNCA and \Nyst we calculate the eigendecomposition once at cost $O(m^3)$ or $O(c^3)$, respectively, at \textsc{Train time}.  Since \emph{\SKPCA maintains this decomposition at all steps}, and \textsc{Test}ing may occur at any step, this favors RNCA and \Nyst.   

Table \ref{tbl:compare} summarizes train/test time and space usage of above mentioned algorithms. KPCA is included in the table as a benchmark. All bounds are mentioned for high probability $\delta = 1/n$ guarantee. As a result $c = O((\log n)/\eps^2)$ for \textsc{Nystr\"{o}m} and $m = O((\log n)/\eps^2)$ for \textsc{RNCA} and \SKPCA. 
One can use Hadamard fast Fourier transforms (Fastfood)\cite{le2013fastfood} in place of Gaussian matrices to gain improvement on train/test time and space usage of \SKPCA and \textsc{RNCA}. These matrices allow us to compute random feature maps in time $O(m \log d)$ instead of $O(md)$, and to store feature functions in space $O(m)$ instead of $O(md)$.  
Since $d$ was relatively small in our examples, we did not observe much empirical benefit of this approach, and we omit it from our further discussions.  

We see \SKPCA wins on \textsc{Space} and \textsc{Train time} by factor $(\log n)/\eps$ over RNCA, and similarly on
\textsc{Space} and \textsc{Test time} by factors $(\log n)/\eps$ and $(\log n)/(\eps^2 k)$ over \Nyst.  
When $d$ is constant and $\eps < ((\log^2 n)/n)^{1/3}$ it improves \textsc{Train time} over \Nyst.  
It is the first method to use \textsc{Space}  sublinear (logarithmic) in $n$ and sub-quartic in $1/\eps$, and have \textsc{Train time} sub-quartic in $1/\eps$, even without counting the eigen-decomposition cost.

\section{Algorithm and Analysis}
\label{sec:RFF}
In this section, we describe our algorithm \textsc{Streaming Kernel Principal Component Analysis (\SKPCA)} for approximating the eigenspace of a data set which exists on a nonlinear manifold and is received in streaming fashion one data point at a time.
\SKPCA consists of two implicit phases. In the first phase, a set of $m$ data oblivious random feature functions ($f_1,\cdots,f_m$) are computed to map data points to a low dimensional Euclidean inner product space. 
These feature functions are used to map each data point $a_i \in \R^d$ to $z_i \in \R^m$.  
In the second phase, each approximate feature vector $z_i$ is fed into a modified \FD~\cite{l13} which is a small space streaming algorithm  for computing an approximate set of singular vectors; the matrix $W \in \R^{m \times \ell}$.

However, in the actual algorithm these phases are not separated.  The feature mapping functions are precomputed (oblivious to the data), so the approximate feature vectors are immediately fed into the matrix sketching algorithm, so we never need to fully materialize and store the full $n \times m$ matrix $Z$.  
Also, perhaps unintuitively, we do not sketch the $m$-dimensional column-space of $Z$, rather its $n$-dimensional row-space.  Yet, since the resulting $\ell$-dimensional row-space of $W$ (with $\ell \ll m$) encodes a lower dimensional subspace within $\R^m$, it serves to represent our kernel principal components.    
Pseudocode is provided in Algorithm \ref{alg:skpca}.

\begin{algorithm}[t]
\caption{\label{alg:skpca} \textsc{SKPCA}}
\begin{algorithmic}
\STATE \textbf{Input:} $A \in \R^{n \times d}$ as data points, a shift-invariant kernel function $K$, and $\ell, m \in \Z$
\STATE \textbf{Output:} Feature maps $[f_1,\cdots,f_{m}]$ and their approximate best $\ell$-dim subspace $W$  
\STATE $[f_1,\cdots,f_{m}]$ = \textsc{FeatureMaps}($K, m$) 
\STATE $B \leftarrow 0^{\ell \times m}$
\FOR {$i \in [n]$}
  \STATE $z_i = \sqrt{\frac{2}{m}} [f_1(a_i),\cdots,f_{m}(a_i)]$ 
  \STATE $B \leftarrow z_i$ \hfill \#\text{insert $z_i$ as a row to $B$}
  \IF {$B$ has no zero valued rows}
      \STATE $[Y,\Sigma, W] \leftarrow \svd(B)$ 
      \STATE $B \leftarrow \sqrt{\max\{0,\Sigma^2 - \Sigma_{\ell/2,\ell/2}^2 I_{\ell}\}} \cdot W^T$ 
  \ENDIF
\ENDFOR
\STATE \textbf{Return} $[f_1,\cdots,f_{m}]$ and $W$ \hfill \#$W\in\R^{m\times \ell}$ 
\end{algorithmic}
\end{algorithm}


\paragraph{Approximate feature maps.}
To make the algorithm concrete, we consider the approximate feature maps described in the general framework of Rahimi and Recht~\cite{rahimi2007random}; label this instantiation of the \textsc{FeatureMaps} function as \textsc{Random-Fourier FeaureMaps} (or \RFM). 
This works for positive definite shift-invariant kernels $K(x,y)=K(x-y)$ (e.g.\ Gaussian kernel $K(x,y) = (1/2\pi)^{d/2} \exp(-\|x-y\|^2 /2)$).  It computes a randomized feature map $z:\R^d \to \R^m$ so that $\E[z(x)^T z(y)] = K(x,y)$ for any $x,y \in \R^d$.  
To construct the mapping $z$, they define $m$ functions of the form $f_i(x) = \cos(r_i^Tx + \gamma_i)$, where $r_i \in \R^{d}$ is a sample drawn uniformly at random from the Fourier transform of the kernel function, and $\gamma_i \sim \textsf{Unif}(0,2\pi]$, uniformly at random from the interval $(0,2\pi]$.   
Applying each $f_i$ on a datapoint $x$, gives the $i$th coordinate of $z(x)$ in $\R^{m}$ as $z(x)_i = \sqrt{2/m}f_i(x)$. This implies each coordinate has squared value of $(z(x)_i)^2 \leq 2/m$.    

We consider  $m = O((1/\eps^2) \log n)$ and $\ell = O(1/\eps)$. 

\textsc{\underline{Space}: }
We store the $m$ functions $f_i$, for $i = 1,\ldots,m$; 
since for each function uses a $d$-dimensional vector $r_i$, it takes $O(dm)$ space in total.
We compute feature map $z(x)$ and get a $m$-dimensional row vector $z(a_i)$ for each data point $a_i \in A$, which then is used to update the sketch $B \in \R^{\ell \times m}$ in \FD.
Since we need an additional $O(\ell m)$ for storing $B$ and $W$, the total space usage of Algorithm \ref{alg:skpca} is $O(dm + \ell m) = O((d\log n)/\eps^2 + (\log n)/\eps^3)$. 

\textsc{\underline{Train Time}: }
Applying the feature map takes $O(n\cdot dm)$ time and computing the modified \FD sketch takes $O(n \ell m)$ time, so the training time is $O(n dm + n\ell m) = O(n \log n (d/\eps^2 + 1/\eps^3))$.

\textsc{\underline{Test Time}: }
For a test point $x_{\text{test}}$, we can lift it to $\R^m$ in $O(dm)$ time using $\{f_1, \ldots, f_m\}$, and then use $W$ to project it to $\R^\ell$ in $O(\ell m)$ time. In total it takes $O(dm + \ell m) = O((d+1/\eps)/\eps^2 \cdot \log n)$ time.  

Although $W$ approximates the eigenspace of $A$, it will be useful to analyze the error by also considering all of the data points lifted to $\R^m$ as the $n \times m$ matrix $Z$; and then its projection to $\R^\ell$ as $\tilde{Z} = ZW \in \R^{n \times \ell}$.  
Note $\tilde Z$ would need an additional $O(n \ell)$ to store, and another pass over $A$ (similar for \Nyst and \RFF); we \emph{do not} compute and store $\tilde Z$, only analyze it.

\section{Error Analysis}
In this section, we prove our main error bounds. 
Let $G = \Phi \Phi^T$ be the exact kernel matrix in \RKHS. 
Let $\hat G = Z Z^T$ be an approximate kernel matrix using $Z \in \R^{n\times m}$; it consists of mapping the $n$ points to $\R^m$ using $m$ \RFM.  
In addition, consider $\tilde G = ZWW^TZ^T$ as the kernel matrix which could be constructed from output $W$ of Algorithm \ref{alg:skpca} using \RFM.  We prove that Algorithm \ref{alg:skpca} satisfies the two following error bounds:
\begin{itemize}
\item Spectral error bound: $\|G - \tilde G\|_2 \leq \eps n$
\item Frobenius error bound: $\|G-\tilde G_k\|_F \leq \|G-G_k\|_F + \eps \sqrt{k} n$
\end{itemize}

\subsection{Spectral Error Analysis}\label{sec:spectral_err}
 

The main technical challenge to prove the spectral error bound is that FD bounds are typically for the covariance matrix $W^T W$ not the gram matrix $W W^T$, and thus new ideas are required.  
%
%
%
The breakdown of the proof is as following:

First, in Lemma \ref{lem:HiltoRR}, we show that if $m = O((1/\eps^2)\log (n/\delta))$ then $\|G - \hat G\|_2 \leq \eps n$ with probability at least $1-\delta$.  
Note this is a ``for all'' result which (see Lemma \ref{lem:GtoPhi}) is equivalent to $| \|\Phi^T x\|^2 - \|Z^T x\|^2 | \leq \eps n$ for \emph{all} unit vectors $x$.  If we loosen this to a ``for each'' result, where the above inequality holds for any one unit vector $x$ (say we only want to test with a single vector $x_\text{test}$), then we show in Theorem \ref{thm:phi-z} that this holds with $m = O((1/\eps^2) \log (1/\delta))$.  
This makes partial progress towards an open question~\cite{ANW14} (can $m$ be independent of $n$?) about creating oblivious subspace embeddings for Gaussian kernel features.  

Second, in Lemma \ref{lem:RRtoFD}, we show that applying the modified Frequent Directions step to $Z$ does not asymptotically increase the error. To do so, we first show that spectrum of $Z$ along the directions that \fd fails to capture is small. We prove this for any matrix $A\in \R^{n\times m}$ that is approximated as $B \in \R^{\ell \times m}$ by \fd.

In the following section, we prove the ``for all'' spectral error bound. To do so, we extend the analysis of Lopez \etal~\cite{lopez2014randomized} to show that the Fourier Random Features of Rahimi and Recht~\cite{rahimi2007random} approximate the spectral error with their approximate Gram matrix within $\eps n$ with high probability.  

\subsubsection{``For All'' Bound for RFFMAPS}

In our proof we use the Bernstein inequality on sum of zero-mean random matrices.

\vspace{1.5mm}
Matrix Bernstein Inequality: 
Let $X_1,\cdots,X_d \in \R^{n\times n}$ be independent random matrices such that for all $1 \leq i \leq d$, $E[X_i] = 0$ and $\|X_i\|_2 \leq R$ for a fixed constant $R$. Define variance parameter as $\sigma^2 = \max \{\|\sum_{i=1}^d \E[X_i^T X_i]\|,\|\sum_{i=1}^d \E[X_i X_i^T]\| \}$. Then for all $t \geq 0$, 
$\Pr\left[\Big\|\sum_{i=1}^d X_i\Big\|_2 \geq t \right] \leq 2n \cdot \exp\left(\frac{-t^2}{3\sigma^2+2Rt}\right)$.
Using this inequality, \cite{lopez2014randomized} bounded $\E[\|G-\hat{G}\|_2]$. Here we employ similar ideas to improve this to a bound on $\|G-\hat{G}\|_2$ with high probability.

\begin{lemma}\label{lem:HiltoRR}
For $n$ points, let $G = \Phi \Phi^T \in \R^{n \times n}$ be the exact gram matrix, and let $\hat{G} = ZZ^T \in \R^{n \times n}$ be the approximate kernel matrix using $m = O((1/\eps^2) \log (n/\delta))$ \RFM. 
Then $\|G-\hat{G}\| \leq \eps n$ with probability at least $1-\delta$.
\end{lemma}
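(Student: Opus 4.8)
The plan is to follow the same route Lopez \etal~\cite{lopez2014randomized} use for their in-expectation bound, but carry a full tail estimate through the Matrix Bernstein inequality stated above instead of just taking expectations. First I would write $Z = [v_1,\ldots,v_m]$ column-wise, so that the $k$-th column $v_k\in\R^n$ has entries $(v_k)_i = \sqrt{2/m}\cos(r_k^Ta_i+\gamma_k)$ and $\hat G = ZZ^T = \sum_{k=1}^m v_kv_k^T$. Since the pairs $(r_k,\gamma_k)$ are drawn independently, and a single pair already gives the Rahimi--Recht unbiasedness $\E[v_kv_k^T] = \tfrac1m G$ (entrywise, $\E[2\cos(r^Ta_i+\gamma)\cos(r^Ta_j+\gamma)] = K(a_i,a_j)$), the matrices $X_k := v_kv_k^T - \tfrac1m G$ are independent, symmetric, mean zero, and $\sum_{k=1}^m X_k = \hat G - G$. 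So it suffices to estimate the two parameters Bernstein needs and optimize $t$.

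For the uniform bound $R$, the key observation (already recorded where the feature map is defined) is that each coordinate of $z(a_i)$ has absolute value at most $\sqrt{2/m}$, so $\|v_k\|^2 \le 2n/m$ \emph{deterministically}; together with $\|\tfrac1m G\|_2 \le \tfrac1m\,\mathrm{tr}(G) = n/m$ (using $K(x,x)=1$, so $\mathrm{tr}(G)=n$ and $\|G\|_2\le n$ since $G\succeq 0$), this gives $\|X_k\|_2 \le 3n/m =: R$. For the variance $\sigma^2$, since $X_k$ is symmetric, $\E[X_k^TX_k]=\E[X_k^2]\preceq \E[(v_kv_k^T)^2] = \E[\|v_k\|^2\, v_kv_k^T]$, and the pointwise bound $\|v_k\|^2\le 2n/m$ with monotonicity of expectation on the PSD cone yields $\E[X_k^2]\preceq \tfrac{2n}{m}\E[v_kv_k^T] = \tfrac{2n}{m^2}G$. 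Summing over $k$ gives $\big\|\sum_{k=1}^m\E[X_k^2]\big\|_2 \le \tfrac{2n}{m}\|G\|_2 \le 2n^2/m =: \sigma^2$.

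Finally I would plug $R$, $\sigma^2$, and $t=\eps n$ into Matrix Bernstein: the exponent becomes $-\eps^2 n^2/(3\sigma^2 + 2R\eps n) = -\eps^2 m/(6(1+\eps)) \le -\eps^2 m/12$ for $\eps\le 1$, so $\Pr[\|\hat G - G\|_2 \ge \eps n] \le 2n\exp(-\eps^2 m/12)$, which drops below $\delta$ exactly when $m = O((1/\eps^2)\log(n/\delta))$ — the claimed statement. As a remark feeding into the rest of the section, this bounds $\|\sum_k X_k\|_2$ over all directions, hence by Lemma~\ref{lem:GtoPhi} is equivalent to the ``for all'' statement $|\,\|\Phi^Tx\|^2-\|Z^Tx\|^2\,|\le\eps n$ for every unit $x$.

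Mechanically this is routine matrix-concentration bookkeeping; the only points that need care are (i) making sure it is the variance term, not $Rt$, that dominates the Bernstein denominator in the regime $\eps<1$, so that the $1/\eps^2$ scaling is genuine, and (ii) checking that the clean estimate $\sigma^2 = O(n^2/m)$ really does follow from combining the \emph{deterministic} coordinate bound on the rows of $Z$ with the trace bound $\|G\|_2\le\mathrm{tr}(G)=n$. This second point is the whole content of the upgrade: it is what turns the $\Theta((n\log n)/m)$ expectation bound of Lopez \etal\ into a high-probability bound at the cost of only a single extra $\log n$ factor in $m$.
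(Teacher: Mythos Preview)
Your proposal is correct and follows essentially the same route as the paper: decompose $\hat G - G$ as a sum of the independent, mean-zero symmetric matrices $v_kv_k^T-\tfrac1m G$, bound $R$ via the deterministic coordinate bound $\|v_k\|^2\le 2n/m$ together with $\|G\|_2\le\mathrm{tr}(G)=n$, bound $\sigma^2$ using $\E[v_kv_k^T]=\tfrac1m G$, and plug into Matrix Bernstein with $t=\eps n$. Your constants are in fact marginally sharper than the paper's (you get $R=3n/m$, $\sigma^2\le 2n^2/m$ versus the paper's $4n/m$ and $3n^2/m$), because you bound $\|\tfrac1m G\|_2$ directly by $n/m$ and drop the $-\tfrac1{m^2}G^2$ term in the variance, whereas the paper routes through $\E[\|Z\|_F^2]$ and a longer expansion of $\E[E_i^2]$; either way the outcome $m=O((1/\eps^2)\log(n/\delta))$ is the same.
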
 
\begin{proof}
Consider $m$ independent random variables $E_i = \frac{1}{m} G - z_i z_i^T$. Note that $\E[E_i] = \frac{1}{m} G - \E[z_i z_i^T] = 0^{n \times n}$ \cite{rahimi2007random}.   Next we can rewrite 
\[
\|E_i\|_2 = \left\|\frac{1}{m} G - z_i z_i^T \right\|_2 = \left\|\frac{1}{m} \E[ZZ^T] - z_i z_i^T\right\|_2
\]
 and thus bound
\begin{align*}
\|E_i\|_2 
&\leq 
\frac{1}{m} \|\E[ZZ^T]\|_2 + \|z_i z_i^T\|_2 
\leq 
\frac{1}{m} \E[\|Z\|_2^2] + \|z_i\|^2  \leq 
\frac{2n}{m} + \frac{2n}{m} = \frac{4n}{m}
\end{align*}
The first inequality is correct because of triangle inequality, and second inequality is achieved using Jensen's inequality on expected values, which states $\|\E[X]\| \leq \E[\|X\|]$ for any random variable $X$.
Last inequality uses the bound on the norm of $z_i$ as $\|z_i\|^2 \leq \frac{2n}{m}$, and therefore $\|Z\|_2^2 \leq \|Z\|_F^2 \leq 2n$.

To bound $\sigma^2$, due to symmetry of matrices $E_i$, simply $\sigma^2 = \|\sum_{i=1}^m \E[E_i^2]\|_2$.
Expanding 
\begin{align*}
\E[E_i^2] 
&=
\E\left[\left(\frac{1}{m} G - z_i z_i^T\right)^2\right] =
\E\left[\frac{G^2}{m^2} + \|z_i\|^2 z_iz_i^T - \frac{1}{m} (z_i z_i^T G + G z_i z_i^T) \right]
\end{align*}
it follows that
\begin{align*}
\E[E_i^2] &\leq \frac{G^2}{m^2} + \frac{2n}{m} \E[z_i z_i^T] - \frac{1}{m}(\E[z_iz_i^T] G + G\;\E[z_i z_i^T]) \\
&= \frac{1}{m^2} (G^2 + 2n G - 2 G^2) = \frac{1}{m^2} (2n G - G^2)
\end{align*}
The first inequality holds by $\|z_i\|^2 \leq 2n/m $, and second inequality is due to $\E[z_i z_i^T] = \frac{1}{m} G$. 
Therefore 
\begin{align*}
\sigma^2 &= \left\|\sum_{i=1}^m \E[E_i^2]\right\|_2 \leq \left\|\frac{1}{m} (2n\; G - G^2) \right\|_2 \\
&\leq \frac{2n}{m} \|G\|_2 + \frac{1}{m} \|G^2\|_2 \leq \frac{2n^2}{m} + \frac{1}{m} \|G\|_2^2 \leq \frac{3n^2}{m}
\end{align*}
the second inequality is by triangle inequality, and the last inequality by $\|G\|_2 \leq \Tr{(G)} = n$.
Setting $M = \sum_{i=1}^m E_i = \sum_{i=1}^m (\frac{1}{m}G - z_{:,i}z_{:,i}^T) = G - \hat{G}$ and using Bernstein inequality with $t = \eps n$ we obtain 
\begin{align*}
\Pr \left[\|G-\hat G\|_2 \geq \eps n \right] &\leq 2n \exp \left(\frac{-(\eps n)^2}{3(\frac{3n^2}{m}) + 2 (\frac{4n}{m}) \eps n} \right) = 2n \exp \left( \frac{-\eps^2 m}{9 + 8\eps} \right) \leq \delta
\end{align*}
Solving for $m$ we get $m \geq \frac{9 + 8\eps}{\eps^2} \log(2n/\delta)$, so with probability at least $1-\delta$ for $m = O(\frac{1}{\eps^2} \log(n/\delta))$, then $\|G-\hat G\|_2 \leq \eps n$.
\end{proof}


Next, we prove the ``for each'' error bound. 
\subsubsection{``For Each'' Bound for \RFM}
\label{app:RFF-whp-each}

Here we bound $\left|\|\Phi^T x\|^2 - \|Z^Tx\|^2\right|$, where $\Phi$ and $Z$ are mappings of data to \RKHS by \RFM, respectively and $x$ is a \textit{fixed} unit vector in $\R^n$.

Note that Lemma \ref{lem:HiltoRR} essentially already gave a stronger proof, where using $m = O((1/\eps^2) \log (n/\delta))$ the bound $\|G - \hat G\|_2 \leq \eps n$ holds along all directions (which makes progress towards addressing an open question of constructing oblivious subspace embeddings for Gaussian kernel features spaces, in \cite{ANW14}).  
The advantage of this proof is that the bound on $m$ will be independent of $n$.  
Unfortunately, in this proof, going from the ``for each'' bound to the stronger ``for all'' bound would seem to require a net of size $2^{O(n)}$ and a union bound resulting in a worse ``for all'' bound with $m = O(n/\eps^2)$.  

On the other hand, main objective of \textsc{Test time} procedure, which is mapping a single data point to the $D$-dimensional or $k$-dimensional kernel space is already interesting for what the error is expected to be for a single vector $x$.  This scenario corresponds to the ``for each'' setting that we will prove in this section.  

In our proof, we use a variant of Chernoff-Hoeffding inequality, stated next.
Consider a set of $r$ independent random variables $\{X_1,\cdots, X_r\}$ where $0 \leq X_i \leq \Delta$. Let $M = \sum_{i=1}^r X_i$, then for any $\alpha \in (0,1/2)$,
$\Pr\left[|M-\E[M]| > \alpha \right]\leq 2\exp\left(\frac{-2\alpha^2}{r \Delta^2}\right)$.

For this proof we are more careful with notation about rows and column vectors.  Now matrix $Z \in \R^{n \times m}$ can be written as a set rows $[z_{1,:}; z_{2,:}; \ldots, z_{n,:}]$ where each $z_{i,:}$ is a vector of length $m$ or a set of columns $[z_{:,1}, z_{:,2}, \ldots, z_{:,d}]$, where each $z_{:,j}$ is a vector of length $n$. We denote the $(i,j)$-th entry of this matrix as $z_{i,j}$.

\begin{theorem}
\label{thm:phi-z}
For $n$ points in any arbitrary dimension and a shift-invariant kernel, let $G = \Phi \Phi^T \in \R^{n \times n}$ be the exact gram matrix, and $\hat{G} = ZZ^T \in \R^{n \times n}$ be the approximate kernel matrix using $m = O((1/\eps^2) \log(1/\delta))$ \RFM.  
Then for any fixed unit vector $x \in \R^n$, it holds that $\left|\|\Phi^Tx\|^2 - \|Z^Tx\|^2\right| \leq \eps n$ with probability at least $1-\delta$. 
\end{theorem}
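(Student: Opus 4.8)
The plan is to fix a unit vector $x \in \R^n$ and express the quantity $\|Z^T x\|^2 = x^T Z Z^T x = x^T \hat{G} x$ as a sum over the $m$ independent random feature maps. Writing $z_i = z_{:,i}$ for the $i$-th \emph{column} of $Z$ (the evaluation of the $i$-th random feature on all $n$ points, scaled by $\sqrt{2/m}$), we have $\hat{G} = \sum_{i=1}^m z_i z_i^T$, so $\|Z^T x\|^2 = \sum_{i=1}^m (z_i^T x)^2 =: \sum_{i=1}^m X_i$. The key observations are: (i) the $X_i$ are independent since the feature maps $f_i$ are drawn independently; (ii) each $X_i$ is bounded, because $|z_i^T x| \le \|z_i\| \|x\| = \|z_i\|$ and each coordinate of $z_i$ has squared value at most $2/m$ so $\|z_i\|^2 \le 2n/m$, giving $0 \le X_i \le 2n/m =: \Delta$; and (iii) $\E[\sum_i X_i] = \E[x^T \hat{G} x] = x^T \E[\hat G] x = x^T G x = \|\Phi^T x\|^2$, using exactly the unbiasedness $\E[z_i z_i^T] = \frac1m G$ established in Lemma~\ref{lem:HiltoRR} (equivalently in \cite{rahimi2007random}).

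With this setup, the theorem reduces to a scalar concentration statement: $M = \sum_{i=1}^m X_i$ is a sum of $m$ independent random variables in $[0,\Delta]$ with $\Delta = 2n/m$, and we want $|M - \E[M]| \le \eps n$. I would apply the Chernoff–Hoeffding variant stated just before the theorem with $r = m$, $\Delta = 2n/m$, and deviation $\alpha = \eps n$. This yields
\[
\Pr\left[\,|M - \E[M]| > \eps n\,\right] \le 2\exp\!\left(\frac{-2(\eps n)^2}{m \cdot (2n/m)^2}\right) = 2\exp\!\left(\frac{-\eps^2 m}{2}\right).
\]
Setting the right-hand side to at most $\delta$ and solving gives $m \ge \frac{2}{\eps^2}\log(2/\delta)$, i.e.\ $m = O((1/\eps^2)\log(1/\delta))$ suffices, which is exactly the claimed bound — crucially with no dependence on $n$. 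Rewriting $M = \|Z^T x\|^2$ and $\E[M] = \|\Phi^T x\|^2$ finishes the proof.

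The main thing to be careful about (rather than a genuine obstacle) is the bookkeeping that makes $X_i$ bona fide bounded and the expectation identity exact. The boundedness step hinges on the per-coordinate bound $(z(a_j)_i)^2 \le 2/m$ noted in the description of \RFM, summed over the $n$ points to get $\|z_{:,i}\|^2 \le 2n/m$; combined with $\|x\| = 1$ and Cauchy–Schwarz this pins down $\Delta$. The expectation step needs $\E[z_{:,i} z_{:,i}^T] = \frac1m G$, which is precisely the $n\times n$ matrix version of the unbiasedness of Rahimi–Recht features applied entrywise; I would cite this from the proof of Lemma~\ref{lem:HiltoRR}. One should also double-check that the Chernoff–Hoeffding variant as stated (requiring $\alpha \in (0,1/2)$ for the scaled version, or simply $\alpha > 0$ here since the variables are not normalized) applies with $\alpha = \eps n$; since the bound is stated for $0 \le X_i \le \Delta$ and arbitrary deviation $\alpha$, this is immediate. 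No net or union bound is needed — the whole point of the ``for each'' setting is that fixing $x$ in advance lets a single application of a scalar tail bound do the work, which is why the $\log n$ factor of Lemma~\ref{lem:HiltoRR} disappears.
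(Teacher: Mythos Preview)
Your proposal is correct and follows essentially the same approach as the paper: define $X_i = (z_{:,i}^T x)^2$, bound $0 \le X_i \le 2n/m$, and apply the stated Chernoff--Hoeffding variant with $\alpha = \eps n$. The only cosmetic difference is that the paper verifies $\E[M] = \|\Phi^T x\|^2$ by a coordinate-by-coordinate expansion, whereas you invoke $\E[\hat G] = G$ directly and use linearity of expectation --- your route is cleaner, and your arithmetic in simplifying the exponent to $-\eps^2 m/2$ (yielding $m \ge (2/\eps^2)\ln(2/\delta)$) is in fact correct where the paper's displayed constant is slightly off.
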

\begin{proof}
Note $\R^n$ is not the dimension of data.
Consider any unit vector $x \in \R^n$.  
Define $m$ independent random variables $\{X_i=\langle {z}_{:,i}, x\rangle^2\}_{i=1}^m$. We can bound each $X_i$ as $0 \leq X_i \leq \|{z}_{:,i}\|^2 \leq 2n/m$ therefore $\Delta = 2n/m$ for all $X_i$s. 
Setting $M = \sum_{i=1}^{m} X_i = \|Z^Tx\|^2$, we observe 
\begin{align*}
&\E[M] 
= 
\sum_{i=1}^{m} \E\left[\langle {z}_{:,i}, x\rangle^2\right] = 
\sum_{i=1}^{m} \E\left[(\sum_{j=1}^n z_{ji} \; x_j)^2\right] \\
&=\sum_{i=1}^{m} \E\left[\sum_{j=1}^n (z_{ji} \; x_j)^2 + 2\sum_{j=1}^n \sum_{k>j}^n z_{ji} \; z_{ki} \; x_j \; x_k\right] 
\\&= 
\sum_{j=1}^n x_j^2 \;\E\left[\sum_{i=1}^{m} z_{ji}^2\right] + 2\sum_{j=1}^n \sum_{k>j}^n x_j\;x_k \;\E\left[ \sum_{i=1}^{m} z_{ji} \; z_{ki}\right] \\
&= 
\sum_{j=1}^n x_j^2 \;\E\left[\langle z_{j,:}, z_{j,:}\rangle \right] + 2\sum_{j=1}^n \sum_{k>j}^n x_j\;x_k \;\E\left[\langle z_{j,:}, z_{k,:}\rangle \right] \\
&=\sum_{j=1}^n x_j^2 \;\langle \phi_{j,:}\;,\phi_{j,:}\rangle + 2\sum_{j=1}^n \sum_{k>j}^n x_j\;x_k \;\langle \phi_{j,:}, \phi_{k,:}\rangle  \\
&= 
\sum_{j=1}^n x_j^2 \sum_{i=1}^{D} \phi_{ji}^2 + 2\sum_{j=1}^n \sum_{k>j}^n x_j \; x_k \sum_{i=1}^{D} \phi_{ji} \; \phi_{ki} 
\\&= 
\sum_{i=1}^{D} \left(\sum_{j=1}^n x_j^2 \; \phi_{ji}^2 + 2\sum_{j=1}^n \sum_{k>j}^n x_j \; x_k \; \phi_{ji} \; \phi_{ki} \right) \\
&= 
\sum_{i=1}^{D} \langle \phi_{:,i}, x\rangle^2 
= 
\|\Phi^T x\|^2
\end{align*}
Since $x$ is a fixed unit vector, it is pulled out of all expectations.
Using the Chernoff-Hoeffding bound and setting $\alpha=\eps n$ yields
$\Pr\left[|\|\Phi^Tx\|^2 - \|Z^Tx\|^2| > \eps n\right] \leq 2 \exp\left(\frac{-2 (\eps n)^2}{m (2n/m)^2}\right) = 2\exp\left(-2\eps^2 m\right) \leq \delta$.  
Then we solve for $m = (1/(2\eps^2)) \ln(2/\delta)$ in the last inequality.
\end{proof}

Next, we prove the last piece which shows applying the modified Frequent Directions step to $Z$ does not asymptotically increase the error. 

\subsubsection{Spectral Error Bound for Algorithm \ref{alg:skpca}}
We first show that spectrum of $Z$ along the directions that \fd fails to captures is small. We prove this for any $n\times m$ matrix $A$ that is approximated as $B \in \R^{\ell \times m}$ by \fd.
\begin{lemma} \label{lem:FD-left-null} 
Consider an $A\in\R^{n \times m}$ matrix with $m \leq n$, and let $B$ be an $\ell \times m$ matrix resulting from running Frequent Directions on $A$ with $\ell$ rows. For any unit vector $y \in \R^n$ with $\|y^T A B^\dagger B\| = 0$, it holds that $\|y^T A\|^2 \leq \|A - A_k\|_F^2/(\ell-k)$, for all $k \leq \ell$, including $k=0$ where $A-A_k = A$.  
\end{lemma}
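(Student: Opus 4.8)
The plan is to translate the standard Frequent Directions (FD) guarantee, which controls the \emph{right} covariance $A^TA$ versus $B^TB$, into a statement about a vector living in the \emph{left} null space of the projected matrix $AB^\dagger B$. Recall that FD run with $\ell$ rows returns $B$ with $0 \preceq A^TA - B^TB$ and $\|A^TA - B^TB\|_2 \le \|A-A_k\|_F^2/(\ell-k)$ for every $k \le \ell$~\cite{l13}; write $\Delta_k := \|A-A_k\|_F^2/(\ell-k)$. The key intermediate claim is: for every $w \in \R^m$ orthogonal to the row space of $B$, one has $\|Aw\|^2 \le \Delta_k\|w\|^2$. This holds because such a $w$ satisfies $Bw = 0$, so $\|Aw\|^2 = w^T(A^TA)w = w^T(A^TA - B^TB)w \le \|A^TA - B^TB\|_2\,\|w\|^2 \le \Delta_k\|w\|^2$; in words, the part of $A$'s action that escapes the row space of $B$ is small, its magnitude being exactly the FD error.

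The second step is to observe that $r := A^Ty$ is precisely such a vector, and that $\|r\|$ is what we want to bound. Since $B^\dagger B$ is the orthogonal projector onto the row space of $B$ it is symmetric, so the hypothesis $\|y^TAB^\dagger B\| = 0$ transposes to $B^\dagger B\,r = 0$; left-multiplying by $B$ and using $BB^\dagger B = B$ gives $Br = 0$, i.e.\ $r$ is orthogonal to every row of $B$. Applying the key claim with $w = r$ yields $\|Ar\| \le \sqrt{\Delta_k}\,\|r\|$. Finally, since $\|y^TA\|^2 = \|A^Ty\|^2 = \|r\|^2 = r^TA^Ty = \langle Ar, y\rangle \le \|Ar\|\,\|y\| = \|Ar\| \le \sqrt{\Delta_k}\,\|r\|$ (using $\|y\|=1$), we get $\|r\| \le \sqrt{\Delta_k}$, hence $\|y^TA\|^2 = \|r\|^2 \le \Delta_k = \|A-A_k\|_F^2/(\ell-k)$. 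The case $r=0$ is trivial, and the case $k=0$ is included by reading $A_0 = 0^{n\times m}$ and $\Delta_0 = \|A\|_F^2/\ell$, the basic FD bound.

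The main obstacle is conceptual rather than computational, and is exactly the mismatch the paper flags for the overall spectral bound: FD naturally bounds $A^TA - B^TB$, an $m\times m$ ``covariance-side'' object, whereas the lemma concerns $\|y^TA\|$ for $y$ in the $n$-dimensional left null space of $AB^\dagger B$. The device that resolves it is the choice $r = A^Ty$, which simultaneously (i) lies in the right null space of $B$, where the FD error is the governing quantity, and (ii) has $\|r\|^2$ expressible as the inner product $\langle Ar, y\rangle$, so that a single application of Cauchy--Schwarz converts the bound on $\|Ar\|$ into the desired bound on $\|r\| = \|y^TA\|$. The argument uses no net or union bound and is entirely deterministic, so it applies to any matrix $A$ and its FD sketch $B$.
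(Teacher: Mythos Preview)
Your argument is correct and is genuinely different from the paper's. The paper takes the SVD $A = USV^T$, expands $y = \sum_i \alpha_i u_i$ in the left singular basis, and then manufactures a unit vector $x = \sum_{i=1}^m \alpha_i v_i \in \R^m$ with $\|Ax\| = \|y^T A\|$; it then spends most of the proof on a contradiction argument to show $\|Bx\| = 0$, after which the FD bound on $x$ finishes the job. Your route avoids the SVD and the contradiction entirely: you work directly with $r = A^T y$, note that $B^\dagger B\, r = 0$ (so $Br = 0$), apply the FD covariance bound to $r$ to get $\|Ar\| \le \sqrt{\Delta_k}\,\|r\|$, and then use the identity $\|r\|^2 = \langle Ar, y\rangle$ together with Cauchy--Schwarz to bootstrap $\|r\|^2 \le \sqrt{\Delta_k}\,\|r\|$. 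This is shorter and more transparent, and it also sidesteps a small wrinkle in the paper's write-up, which tacitly assumes $y$ lies in the column space of $A$ (so that $\sum_{i\le m}\alpha_i^2 = 1$); your argument needs no such assumption. What the paper's construction buys is an explicit \emph{unit} vector $x$ to which the per-direction FD guarantee applies verbatim, whereas your $r$ is not unit and you compensate with the Cauchy--Schwarz self-bounding trick.
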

\begin{proof}
Let $[U,S,V] = \svd(A)$ be the \svd\;  of $A$.  
Consider any unit vector $y\in\R^n$ that lies in the column space of $A$ and the null space of $B$, that is $\|y^T A B^\dagger B\| = 0$ and $\|y^T AA^\dagger\| = \|y\| = 1$. 
Since $U = [u_1, u_2, \ldots, u_n]$ provides an orthonormal basis for $\R^n$, we can write 
\[
y = \sum_{i=1}^n \alpha_i u_i  \;\;\;\; \text{such that }\;\;\;\; \alpha_i = \langle y, u_i\rangle , \;\;\;\; \sum_{i=1}^n \alpha_i^2 = 1
\] 
Since $1 = \sum_{i=1}^n \alpha_i^2 = \|y\| = \|y^T AA^\dagger\| = \|y^T U_m U_m^T\| = \sum_{i=1}^m \alpha_i^2$, therefore $\alpha_i = 0$ for $i>m$.  
Moreover $ \|y^T A\|^2 = \sum_{i=1}^m s_i^2 \langle y, u_i \rangle^2 = \sum_{i=1}^m s_i^2 \alpha_i^2$.
This implies there exists a \textit{unit} vector $x = \sum_{i=1}^m \alpha_i v_i \in \R^m$ with $\alpha_i = \langle x,v_i\rangle = \langle y,u_i\rangle$ for $i=1,\cdots,m$ such that $\|y^T A\| = \|A x\|$ and importantly $\|Bx\| = 0$, which we will prove shortly.  

Then, due to the Frequent Directions bound~\cite{gp14}, for any unit vector $\bar x\in\R^m$, $\|A\bar x\|^2 - \|B \bar x\|^2 \leq \|A-A_k\|_F^2/(\ell-k)$, and for our particular choice of $x$ with $\|Bx\|=0$, we obtain $\|y^T A\| = \|Ax\|^2 \leq \|A-A_k\|_F^2/(\ell-k)$, as desired.

Now to see that $\|Bx\| = 0$, we will assume that $\|Bx\| > 0$ and prove a contradiction.  Since $\|Bx\| > 0$, then $x$ is not in the null space of $B$, and $\|\pi_B(x)\| > 0$ for any unit vector $x$.  Let $\Sigma = \diag(\sigma_1, \ldots, \sigma_\ell)$, assuming $\sigma_1 \geq \sigma_2 \geq \ldots \geq \sigma_\ell > 0$, are the singular values of $B$, and $W = [w_1, \ldots, w_\ell] \in \mathbb{R}^{m x \ell}$ are its right singular vectors.  Then $\|Bx\| = \|\Sigma W x\|$ and if $\|\pi_B(x)\| > 0$, then setting $\bar{\Sigma} = \diag(1, 1, \ldots, 1)$ and $\bar{B} = \bar{\Sigma}W = W_\ell$ to remove the scaling from $B$, we have $\|\pi_{\bar B}(x)\| > 0$.  
Similarly, if $\|y^T U S V^T B^\dagger B\| = \|y^T \pi_B(A)\| = 0$, then setting $\bar{S} = \diag(1,\ldots, 1)$ and $\bar{A} = U \bar{S} V^T$ to remove scale from $A$, we have $\|y^T \pi_B(\bar{A})\| = 0$.  
Hence 
\begin{align*}
0 
&< 
\|\pi_{\bar{B}}(x)\| 
= 
\|x B^\dagger B\| 
= 
\|x W_\ell W_\ell^T\|
=
\| \sum_{j=1}^\ell \langle x, w_j \rangle\| = \|\sum_{j=1}^\ell  \sum_{i=1}^m  \alpha_i \langle v_i, w_j\rangle \|
\end{align*}
and 
\begin{align*}
0 
&= 
\|y^T \pi_B(\bar{A})\| 
=
\|\sum_{i=1}^m \langle y, u_i \rangle v_i^T W_\ell W_\ell^T\|
=
\|\sum_{i=1}^m \alpha_i v_i^T W_\ell\| = \|\sum_{j=1}^\ell  \sum_{i=1}^m  \alpha_i \langle v_i, w_j\rangle\|.  
\end{align*}
Since last terms of each line match, we have a contradiction, and hence $\|Bx\| = 0$.
\end{proof}

\begin{lemma}\label{lem:RRtoFD}
Let $\tilde Z = ZW$, and $\tilde G = \tilde Z \tilde Z^T = ZWW^TZ^T$ be the corresponding gram matrix from $Z \in \R^{n \times m}$ and $W \in \R^{m \times \ell}$ constructed via Algorithm \ref{alg:skpca} with $\ell = 2/\eps$. Comparing to $\hat{G} = ZZ^T$, then $\|\tilde G-\hat{G}\|_2 \leq \eps n$.
\end{lemma}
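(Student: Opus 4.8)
The plan is to reduce the gram‑matrix error to a single covariance‑type quantity that the Frequent Directions (FD) guarantee already controls. First I would rewrite the target: since $\tilde G = ZWW^TZ^T$ and $\hat G = ZZ^T$, we have $\hat G-\tilde G = Z(I_m-WW^T)Z^T$. The matrix $W$ output by Algorithm~\ref{alg:skpca} has orthonormal columns (it is the right‑singular‑vector matrix of the last $\svd$), so $P := I_m-WW^T$ is an orthogonal projector and $\hat G-\tilde G = (ZP)(ZP)^T \succeq 0$. Hence $\|\hat G-\tilde G\|_2 = \|ZP\|_2^2$, and it suffices to show $\|ZP\|_2^2 \le \eps n$.

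Next I would identify the direction realizing $\|ZP\|_2$. If $\|ZP\|_2=0$ we are done; otherwise let $v\in\R^m$ be a top right singular vector of $ZP$. Then $v$ lies in the row space of $ZP$, which is contained in $\mathrm{col}(P)=\mathrm{col}(W)^\perp$, so $W^Tv=0$, $Pv=v$, and therefore $\|ZP\|_2 = \|ZPv\| = \|Zv\|$. Moreover, from the sketch update $B\leftarrow \sqrt{\max\{0,\Sigma^2-\Sigma_{\ell/2,\ell/2}^2 I_\ell\}}\cdot W^T$, every row of the final sketch $B$ lies in $\mathrm{col}(W)$, so $W^Tv=0$ forces $Bv=0$. (One may equivalently avoid singular vectors: for any unit $y\in\R^n$, set $\hat w = PZ^Ty/\|PZ^Ty\|\in\mathrm{col}(W)^\perp$; then $y^T(\hat G-\tilde G)y = \|PZ^Ty\|^2 = \langle y,Z\hat w\rangle^2 \le \|Z\hat w\|^2$ and $B\hat w=0$.)

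Now I would invoke the modified Frequent Directions bound used inside the proof of Lemma~\ref{lem:FD-left-null} (from~\cite{gp14}): for every unit $\bar x\in\R^m$, $\|Z\bar x\|^2-\|B\bar x\|^2 \le \|Z-Z_k\|_F^2/(\ell-k)$ for all $k\le\ell$. Taking $k=0$ and $\bar x=v$ (so $\|Bv\|=0$) gives $\|Zv\|^2 \le \|Z\|_F^2/\ell$. Finally, each entry of $Z$ has squared value at most $2/m$ (since $z(x)_i=\sqrt{2/m}\,f_i(x)$ with $|f_i|\le 1$), hence $\|Z\|_F^2\le 2n$, and with $\ell=2/\eps$ we get $\|Zv\|^2 \le 2n/\ell = \eps n$. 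Chaining, $\|\hat G-\tilde G\|_2 = \|ZP\|_2^2 = \|Zv\|^2 \le \eps n$.

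The routine‑but‑careful steps are the algebraic reductions in the first two paragraphs: that $I_m-WW^T$ is genuinely the relevant projector, that $B$'s rows sit in $\mathrm{col}(W)$ (possibly after a final flush of the sketch), and that $\|(ZP)(ZP)^T\|_2=\|ZP\|_2^2$. The one conceptual obstacle — exactly the issue the authors flag, that FD bounds $W^TW$‑type quantities rather than $WW^T$‑type ones — is handled precisely by passing to the right‑singular direction $v\in\R^m$: it is simultaneously a maximizer for $\|ZP\|_2$ \emph{and} annihilated by $B$, which is what lets us apply the scalar FD inequality instead of needing a spectral statement about the projector $WW^T$ acting on the $n$‑dimensional side.
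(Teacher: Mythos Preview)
Your proof is correct and arrives at the same endpoint as the paper---reducing to the scalar FD inequality $\|Z\bar x\|^2-\|B\bar x\|^2\le\|Z\|_F^2/\ell$ applied at a unit vector annihilated by $B$, then using $\|Z\|_F^2\le 2n$---but you get there by a shorter path. The paper works from the $n$-dimensional side: for arbitrary unit $y\in\R^n$ it shows $y^T(\hat G-\tilde G)y=\|y^TZ(I-WW^T)\|^2$, then manufactures $y^*=y^TZ(I-WW^T)Z^\dagger/\|\cdot\|\in\R^n$, checks $\|y^*ZWW^T\|=0$, invokes Lemma~\ref{lem:FD-left-null} to bound $\|y^*Z\|^2$, and finally argues $\|y^TZ(I-WW^T)Z^\dagger\|\le 1$ via projections. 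Inside Lemma~\ref{lem:FD-left-null} the paper then converts $y^*$ back to a vector $x\in\R^m$ with $\|Bx\|=0$ before applying the FD bound. You skip this $\R^n\to\R^n\to\R^m$ detour entirely: by observing $\hat G-\tilde G=(ZP)(ZP)^T\succeq 0$ you read off $\|\hat G-\tilde G\|_2=\|ZP\|_2^2$ and take the top right singular vector $v\in\R^m$ of $ZP$, which automatically sits in $\mathrm{col}(W)^\perp$ and hence satisfies $Bv=0$. The gain is brevity and transparency---you never need Lemma~\ref{lem:FD-left-null}, the pseudoinverse $Z^\dagger$, or the projection argument for $\|v\|\le 1$. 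What the paper's route buys is that Lemma~\ref{lem:FD-left-null} is isolated as a standalone statement about the left null space of an FD sketch, which may be of independent use; your argument folds its content directly into this lemma's proof.
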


\begin{proof}
Consider any unit vector $y\in \R^n$, and note that $y^TZ = [y^TZ]_W + [y^TZ]_{\perp W}$ where $[y^TZ]_W = y^TZWW^T$ lies on the column space spanned by $W$, and $[y^TZ]_{\perp W} = y^TZ(I-WW^T)$ is in the null space of $W$. 
Then first off $\|y^TZ\|^2 = \|[y^TZ]_W\|^2 + \|[y^TZ]_{\perp W}\|^2$ since two components are perpendicular to each other.   
Second $[y^TZ]_W W = y^TZ WW^T W = y^TZW$ and $[y^TZ]_{\perp W} W = y^TZ (I-WW^T)W = y^TZ (W-W) = 0$. Knowing these two we can say
\begin{align*}
&y^T (ZZ^T - \tilde Z \tilde Z^T) y \\
&= (y^TZ)(y^TZ)^T - (y^TZ)WW^T(y^TZ)^T \\
&= \|y^TZ\|^2 - \left([y^TZ]_W + [y^TZ]_{\perp W}\right) WW^T \left([y^TZ]_W + [y^TZ]_{\perp W}\right)^T \\
&= \|y^TZ\|^2 - (y^TZW) (y^TZW)^T \\
&=  \left(\|[y^TZ]_W\|^2 + \|[y^TZ]_{\perp W}\|^2\right) - \|y^T ZW\|^2 \\
&= \|[y^TZ]_{\perp W}\|^2.  
\end{align*}
The last inequality holds because $\|y^T ZW\| = \|y^T ZWW^T\| = \|[y^T Z]_W\|$ as $W$ is an orthonormal matrix.

To show $\|[y^TZ]_{\perp W}\| \leq \eps \|Z\|_F^2$, consider vector $v = y^TZ(I-WW^T)Z^\dagger$ and let $y^*= v /\|v\|$. Clearly $y^*$ satisfies requirement of Lemma \ref{lem:FD-left-null} as it is a unit vector in $\R^n$ and $\|y^* Z WW^T\| = 0$ as
\begin{align*}
\|y^* Z WW^T\| &= \|y^TZ(I-WW^T)Z^\dagger Z WW^T\| / \|v\| \\
&= \|y^TZ(I-WW^T)WW^T\| / \|v\| = 0.   
\end{align*}
Therefore it satisfies $\|y^* Z\| \leq \|Z-Z_k\|_F^2/(\ell - k)$. Since $\|Z\|_F^2 \leq 2n$ for $k =0$ and $\ell = 2/\eps$, we obtain
\begin{align*}
\|[y^TZ]_{\perp W}\|^2 &= \|y^TZ(I-WW^T)\| \\
&= \|y^TZ(I-WW^T)Z^\dagger Z\|^2 = \|y^* Z\|^2 \|v\|^2 \\
&\leq \|Z\|_F^2 \|v\|^2 /\ell \leq \eps n \|v\|^2.    
\end{align*}
It is left to show that $\|v\| \leq 1$. For that, note $\pi_{ZW}\left(\pi_Z (y^T)\right) =  \pi_{ZW} (y^TZZ^\dagger) = y^T ZZ^\dagger (ZW) (ZW)^\dagger = y^T ZW (ZW)^\dagger = \pi_{ZW}(y^T) $.  The finally we obtain 
\begin{align*}
\|v\|^2 &= \|y^T Z(I-WW^T)Z^\dagger\|^2 \\
&= \|y^T ZZ^\dagger - y^T ZWW^TZ^\dagger\|^2 \\
&= \|\pi_Z(y^T) - \pi_{ZW}(y^T)\|^2 \\
& = \|\pi_Z(y^T) - \pi_{ZW}(\pi_Z(y^T))\|^2 \\
&\leq \|\pi_Z(y^T)\|^2 \leq \|y\|^2 = 1.  \hspace{.4in} \qedhere
\end{align*}
%
\end{proof}

Combining Lemmas \ref{lem:HiltoRR} and \ref{lem:RRtoFD} and using triangle inequality, we get our main result.  

\begin{theorem} \label{thm:main}
Let $G = \Phi \Phi^T$ be the exact kernel matrix over $n$ points.  
Let $\tilde G = ZW^TWZ^T$ be the result of $Z$ from $m = O((1/\eps^2) \log (n/\delta))$ \RFM and $W$ from running Algorithm \ref{alg:skpca} with $\ell = 4/\eps$.  
Then with probability at least $1-\delta$,  we have $\|G-\tilde G\|_2 \leq \eps n$.
\end{theorem}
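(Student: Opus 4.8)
The statement is a straightforward consequence of the two lemmas already established, so the plan is simply to run each at half the target error and combine by the triangle inequality. Concretely, I would first invoke Lemma~\ref{lem:HiltoRR} with error parameter $\eps/2$: since $m = O((1/(\eps/2)^2)\log(n/\delta)) = O((1/\eps^2)\log(n/\delta))$, we get that with probability at least $1-\delta$ the approximate kernel matrix $\hat G = ZZ^T$ satisfies $\|G - \hat G\|_2 \leq (\eps/2)\, n$. Note this is the only probabilistic ingredient; the randomness is entirely in the draws of the \RFM feature functions, so I would condition on this good event for the rest of the argument.

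Next I would invoke Lemma~\ref{lem:RRtoFD} with error parameter $\eps/2$: running Algorithm~\ref{alg:skpca} with $\ell = 2/(\eps/2) = 4/\eps$ guarantees, deterministically given $Z$, that $\tilde G = ZWW^TZ^T$ satisfies $\|\tilde G - \hat G\|_2 \leq (\eps/2)\, n$. This is where the choice $\ell = 4/\eps$ in the theorem statement comes from; I would just point out that it matches the $\ell = 2/(\eps/2)$ required by the lemma. Then the triangle inequality gives
\[
\|G - \tilde G\|_2 \;\leq\; \|G - \hat G\|_2 + \|\hat G - \tilde G\|_2 \;\leq\; \tfrac{\eps}{2}\, n + \tfrac{\eps}{2}\, n \;=\; \eps n,
\]
on the good event, i.e.\ with probability at least $1-\delta$.

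There is essentially no technical obstacle here — the real work was done in Lemmas~\ref{lem:HiltoRR} (the matrix Bernstein argument for the random Fourier features) and~\ref{lem:RRtoFD} (the Frequent Directions left-null-space argument via Lemma~\ref{lem:FD-left-null}). The only thing to be mildly careful about is the bookkeeping of constants: making sure the $\eps/2$ rescaling is absorbed cleanly into the $O(\cdot)$ for $m$ and into the exact constant $\ell = 4/\eps$, and confirming that the $\|Z\|_F^2 \leq 2n$ bound used inside Lemma~\ref{lem:RRtoFD} does not itself depend on the good event (it does not — it is a deterministic property of the \RFM coordinate bound $z(x)_i^2 \leq 2/m$). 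So the proof is just these three lines plus a sentence reconciling the parameters.
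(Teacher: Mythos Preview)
Your proposal is correct and matches the paper's own proof, which is literally the one line ``Combining Lemmas~\ref{lem:HiltoRR} and~\ref{lem:RRtoFD} and using triangle inequality, we get our main result.'' Your explicit bookkeeping of the $\eps/2$ split, the resulting $\ell = 2/(\eps/2) = 4/\eps$, and the observation that Lemma~\ref{lem:RRtoFD} is deterministic (since $\|Z\|_F^2 \leq 2n$ holds pointwise for \RFM) is exactly what the paper leaves implicit.
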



Next, in the following section we prove our frobenius error bound for Algorithm \ref{alg:skpca}.
\subsection{Frobenius Error Analysis}
\label{sec:gen-err}

Let the true gram matrix be $G = \Phi \Phi^T$, and consider $G' = YY^T$, for any $Y$ including when $Y = ZW$.
First we write the bound in terms of $\Phi$ and $Y$.  
\begin{lemma}
\label{lem:GtoPhi}
$\|G - G'\|_2 = \max\limits_{\|x\|=1}  | \|\Phi^T x\|^2 - \|Y^Tx\|^2|$.
\end{lemma}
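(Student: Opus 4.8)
The statement is the standard variational characterization of the spectral norm specialized to a difference of Gram matrices, so the plan is short. First I would observe that $M := G - G' = \Phi\Phi^T - YY^T$ is a symmetric $n\times n$ matrix (both $\Phi$ and $Y$ have $n$ rows, so $G$ and $G'$ are $n\times n$ and symmetric, hence so is their difference), even though $M$ itself need not be positive semidefinite.

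Next I would invoke the fact that for any symmetric matrix $M\in\R^{n\times n}$ one has $\|M\|_2 = \max_i |\lambda_i(M)| = \max_{\|x\|=1} |x^T M x|$; this follows from the spectral theorem, since writing $M = \sum_i \lambda_i u_i u_i^T$ in an orthonormal eigenbasis gives $x^T M x = \sum_i \lambda_i \langle x,u_i\rangle^2$, whose extreme values over the unit sphere are $\lambda_{\max}$ and $\lambda_{\min}$, and $\|M\|_2 = \max(|\lambda_{\max}|,|\lambda_{\min}|)$.

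Finally I would rewrite the Rayleigh quotient in norm form: for a unit vector $x$,
\[
x^T M x = x^T \Phi \Phi^T x - x^T Y Y^T x = \|\Phi^T x\|^2 - \|Y^T x\|^2,
\]
so that $\|G-G'\|_2 = \max_{\|x\|=1} |x^T M x| = \max_{\|x\|=1} \big|\,\|\Phi^T x\|^2 - \|Y^T x\|^2\,\big|$, which is exactly the claim.

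There is no real obstacle here; the only point requiring a word of justification is the symmetric-matrix identity $\|M\|_2 = \max_{\|x\|=1}|x^T M x|$ (in particular that one takes the absolute value of the quadratic form, to capture the possibly negative $\lambda_{\min}$), and the trivial remark that the dimensions match so that $G$ and $G'$ are comparable $n\times n$ matrices regardless of the (possibly infinite) feature-space dimensions of $\Phi$ and $Y$.
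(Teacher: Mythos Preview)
Your proof is correct and follows essentially the same approach as the paper: the paper also invokes the variational characterization $\|G-G'\|_2 = \max_{\|x\|=1}|x^T(G-G')x|$ for symmetric matrices and then rewrites $x^T\Phi\Phi^T x - x^T Y Y^T x$ as $\|\Phi^T x\|^2 - \|Y^T x\|^2$. If anything, your version is slightly more careful in spelling out why the absolute value is needed and why the dimensions are compatible.
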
 

\begin{proof}
Recall we can rewrite spectral norm as $\|G - G'\|_2 = \max\limits_{\|x\|=1} |x^TGx - x^T G'x| = \max\limits_{\|x\|=1} |x^T \Phi \Phi^T x - x^TY Y^Tx| = \max\limits_{\|x\|=1} |\|\Phi^T x\|^2 - \|Y^Tx\|^2|$.
First line follows by definition of top eigenvalue of a symmetric matrix, and last line is true because $\|y\|^2 = y^T y$ for any vector $y$.\qedhere
\end{proof}

Thus if $\|G - G'\|_2 \leq \eps n$ where $G' = Y Y^T$ could be reconstructed by any of the algorithms we consider, then it implies $\max_{\|x\|=1} |\|\Phi^T x\|^2 - \|Y^Tx\|^2| \leq \eps n$.  
We can now generalize the spectral norm bound to Frobenius norm.  Let $G-G' = U\Lambda U^T$ be the eigen decomposition of $G-G'$. Recall that one can write each eigenvalue as $\Lambda_{i,i} = u_i^T (G - G') u_i$, and the definition of the Frobenius norm implies 
$\|G - G'\|_F^2 = \sum\limits_{i=1}^n \Lambda_{i,i}^2$
Hence
\begin{align*}
\|G - G'\|_F^2 &=\sum_{i=1}^n (u_i^T (G-G') u_i)^2 = \sum_{i=1}^n (\|\Phi^T u_i\|^2 - \|Y^T u_i\|^2)^2 \leq \sum_{i=1}^n (\eps n)^2 \leq \eps^2 n^3 
\end{align*}
Therefore $\|G-G'\|_F \leq \eps n^{1.5}$.
We can also show a more interesting bound by considering $G_k$ and $G'_k$, the best rank $k$ approximations of $G$ and $G'$ respectively.  
\begin{lemma}
Given that $\|G - G'\|_2 \leq \eps n$ we can bound 
$\|G - G'_k\|_F \leq \|G - G_k\|_F + \eps \sqrt{k} n$.
\end{lemma}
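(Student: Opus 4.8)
The plan is to use the optimality of $G'_k$ as the best rank-$k$ approximation of $G'$, but to route the argument through the low-dimensional subspace in which $G_k$ and $G'_k$ actually live, so that the spectral bound $\|G-G'\|_2 \le \eps n$ only has to be ``paid for'' in dimension $O(k)$ rather than $n$. Let $T$ be the sum of the column spaces of $G_k$ and $G'_k$ (equivalently, the span of the top-$k$ eigenspaces of $G$ and of $G'$), so $\dim T \le 2k$, and let $P$ be the orthogonal projection onto $T$. Since the rows and columns of both $G_k$ and $G'_k$ lie in $T$, the matrices $G-G_k$ and $G-G'_k$ agree on the $T\times T^\perp$, $T^\perp\times T$, and $T^\perp\times T^\perp$ blocks (where they both equal the corresponding block of $G$); by Frobenius-orthogonality of the four blocks these shared contributions cancel in the difference, leaving
\[
\|G - G'_k\|_F^2 - \|G - G_k\|_F^2 \;=\; \|PGP - G'_k\|_F^2 - \|PGP - G_k\|_F^2 .
\]
In particular the dominant ``tail'' term $\|(I-P)G(I-P)\|_F$ has disappeared, and the problem now lives entirely inside the $(\le 2k)$-dimensional space $T$.

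Next I would identify $T$ with $\R^r$, $r\le 2k$, and set $\bar G := PGP$, $\bar G' := PG'P$ restricted to $T$ (both PSD). Cauchy interlacing shows that restricting to $T$ preserves the relevant optimality: the top-$k$ eigenvalues of $\bar G$ are exactly $\sigma_1(G),\ldots,\sigma_k(G)$ with the same eigenvectors, so $G_k|_T$ is the best rank-$k$ approximation $\bar G_k$ of $\bar G$, and likewise $G'_k|_T = \bar G'_k$. Interlacing also gives $\|\bar G - \bar G'\|_2 \le \|G-G'\|_2 \le \eps n$ and, since $\bar G$ has at most $2k$ eigenvalues of which at most $k$ lie beyond the cutoff, $\|\bar G - \bar G_k\|_F \le \|G - G_k\|_F$. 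So it suffices to bound $\|\bar G - \bar G'_k\|_F^2 - \|\bar G - \bar G_k\|_F^2$ for $(\le 2k)$-dimensional PSD matrices.

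Inside $T$ the spectral guarantee converts to Frobenius cheaply: $\bar G - \bar G'$ has rank at most $2k$, hence $\|\bar G - \bar G'\|_F \le \sqrt{2k}\,\|\bar G - \bar G'\|_2 \le \sqrt{2k}\,\eps n$. Combining this with optimality of $\bar G'_k$ (so $\|\bar G' - \bar G'_k\|_F \le \|\bar G' - \bar G_k\|_F$) and two applications of the triangle inequality yields $\|\bar G - \bar G'_k\|_F \le \|\bar G - \bar G_k\|_F + 2\sqrt{2k}\,\eps n$. Substituting back and using $\|\bar G - \bar G_k\|_F \le \|G - G_k\|_F$, the difference $\|G-G'_k\|_F^2 - \|G-G_k\|_F^2$ is at most $2\|G-G_k\|_F\cdot(2\sqrt{2k}\,\eps n) + (2\sqrt{2k}\,\eps n)^2$, i.e. $\bigl(\|G-G_k\|_F + 2\sqrt{2k}\,\eps n\bigr)^2 - \|G-G_k\|_F^2$, so $\|G-G'_k\|_F \le \|G-G_k\|_F + O(\eps\sqrt{k}\,n)$, matching the stated bound up to the constant (a sharper accounting, or rescaling $\eps$, recovers the clean form).

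The main obstacle is precisely what motivates the reduction: the naive estimate $\|G-G'_k\|_F \le \|G-G_k\|_F + 2\|G-G'\|_F$ is worthless, because globally all that $\|G-G'\|_2 \le \eps n$ buys is $\|G-G'\|_F \le \eps n^{3/2}$, and $\sqrt n$ is far larger than the target $\sqrt k$. One genuinely has to confine the discrepancy to the $O(k)$-dimensional subspace $T$ spanned by the two low-rank approximations, where trading spectral norm for Frobenius norm costs only $\sqrt{2k}$; getting the block bookkeeping right so that $\|(I-P)G(I-P)\|_F$ cancels between the two sides, and checking via interlacing that passing to $\bar G,\bar G'$ preserves which matrices are optimal, are the steps that need care.
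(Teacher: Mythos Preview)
Your argument is correct (up to the constant $2\sqrt{2}$, which you flag), but it is genuinely different from the paper's. The paper does not introduce the joint subspace $T$, does not do any block decomposition, and never appeals to interlacing; instead it works directly with an orthonormal basis $\{v_i\}$ adapted to $G-G'$, splits $\|G-G'_k\|_F^2$ into the first $k$ terms and the rest, bounds the first $k$ terms by $k(\eps n)^2$ using the spectral hypothesis and the identity $v^T(G-G')v=\|\Phi^T v\|^2-\|Y^T v\|^2$, and controls the tail terms by $\|G-G_k\|_F^2$ using positive semidefiniteness of $G'$ and the eigenvector characterization of $G_k$. That route is shorter and hits the stated constant $\sqrt{k}$ exactly, but its very first line (writing $\|G-G'_k\|_F^2$ as $\sum_i(v_i^T(G-G'_k)v_i)^2$ with $v_i$ eigenvectors of $G-G'$ rather than of $G-G'_k$) is delicate, whereas your block-cancellation step and the subsequent reduction to a $(\le 2k)$-dimensional problem are completely clean. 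The conceptual trade-off is: the paper pays for only $k$ directions by choosing a single adapted basis; you pay for $2k$ directions because you carry both top-$k$ eigenspaces, which is why you land at $2\sqrt{2k}$ instead of $\sqrt{k}$.
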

\begin{proof}
Let $[u_1, \ldots, u_n]$ and $[v_1, \ldots, v_n]$ be eigenvectors of $G$ and $G - G'$, respectively. Then
\begin{align*}
\|G - G'_k\|_F^2 
&= 
\sum_{i=1}^k (v_i^T (G-G'_k) v_i)^2 \\
& \hspace{2mm}+ \sum_{i=k+1}^n (v_i^T (G-G'_k) v_i)^2 
\\& \leq 
\sum_{i=1}^k (v_i^T (G-G'_k) v_i)^2 + \sum_{i=k+1}^n (v_i^T G v_i)^2 
\\& \leq 
\sum_{i=1}^k (v_i^T (\Phi\Phi^T - YY^T) v_i)^2 + \sum_{i=k+1}^n (u_i^T G u_i)^2 
\\ &= 
\sum_{i=1}^k (\|\Phi^T v_i\|^2 - \|Y^T v_i\|^2)^2 + \|G - G_k\|_F^2 
\\& \leq 
k (\eps n)^2 + \|G - G_k\|_F^2.
\end{align*}
The second transition is true because $G'$ is positive semidefinite, therefore $v_i^T (G-G'_k) v_i \leq v_i^T G v_i$, and third transition holds because if $u_i$ is $i$th eigenvector of $G$ then, $u_i^T G u_i \geq v_i^T G v_i$ where $v_i$ is $i$th eigenvector of $G - G'$.
Taking square root yields 
\[
\|G - G'_k\|_F^2 \leq \sqrt{\|G - G_k'\|_F^2 + (\eps n)^2 k} \leq \|G - G_k'\|_F + \eps n \sqrt{k}. \qedhere
\] 
\end{proof}

Thus we can get error bounds for the best rank-$k$ approximation of the data in RKHS that depends on ``tail'' $\|G-G_k\|_F$ which is typically small.  We can also make the second term $\eps n \sqrt k$ equal to $\eps' n$ by using a value of $\eps = \eps' / \sqrt k$ in the previously described algorithms.

\begin{figure*}[t!]

\rotatebox{90}{\tiny \hspace{10mm}\textsf{Kernel Frobenius Error}} 
\includegraphics[width=\figsize]{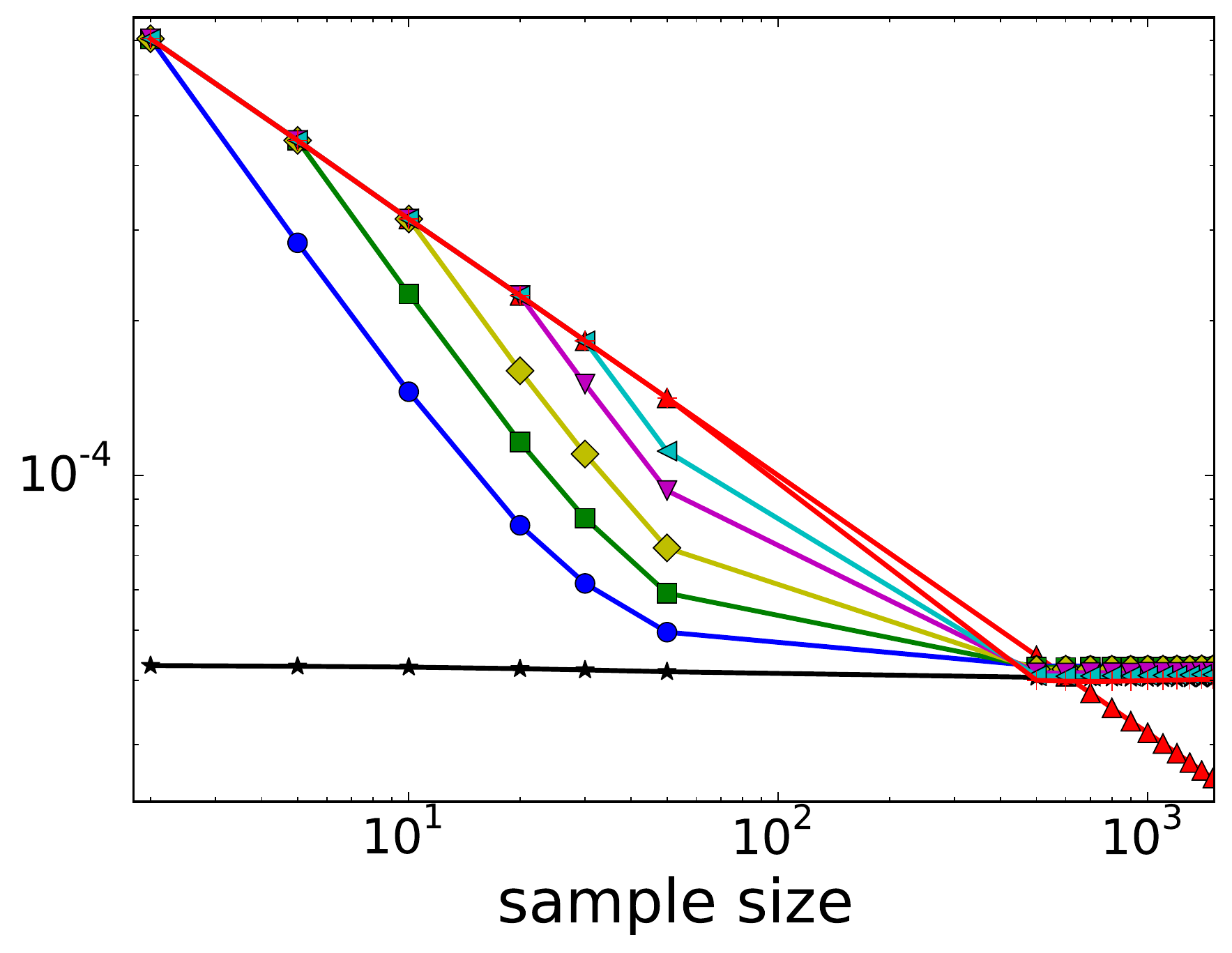}
%
\rotatebox{90}{\tiny \hspace{10mm}\textsf{Kernel Spectral Error}}
\includegraphics[width=\figsize]{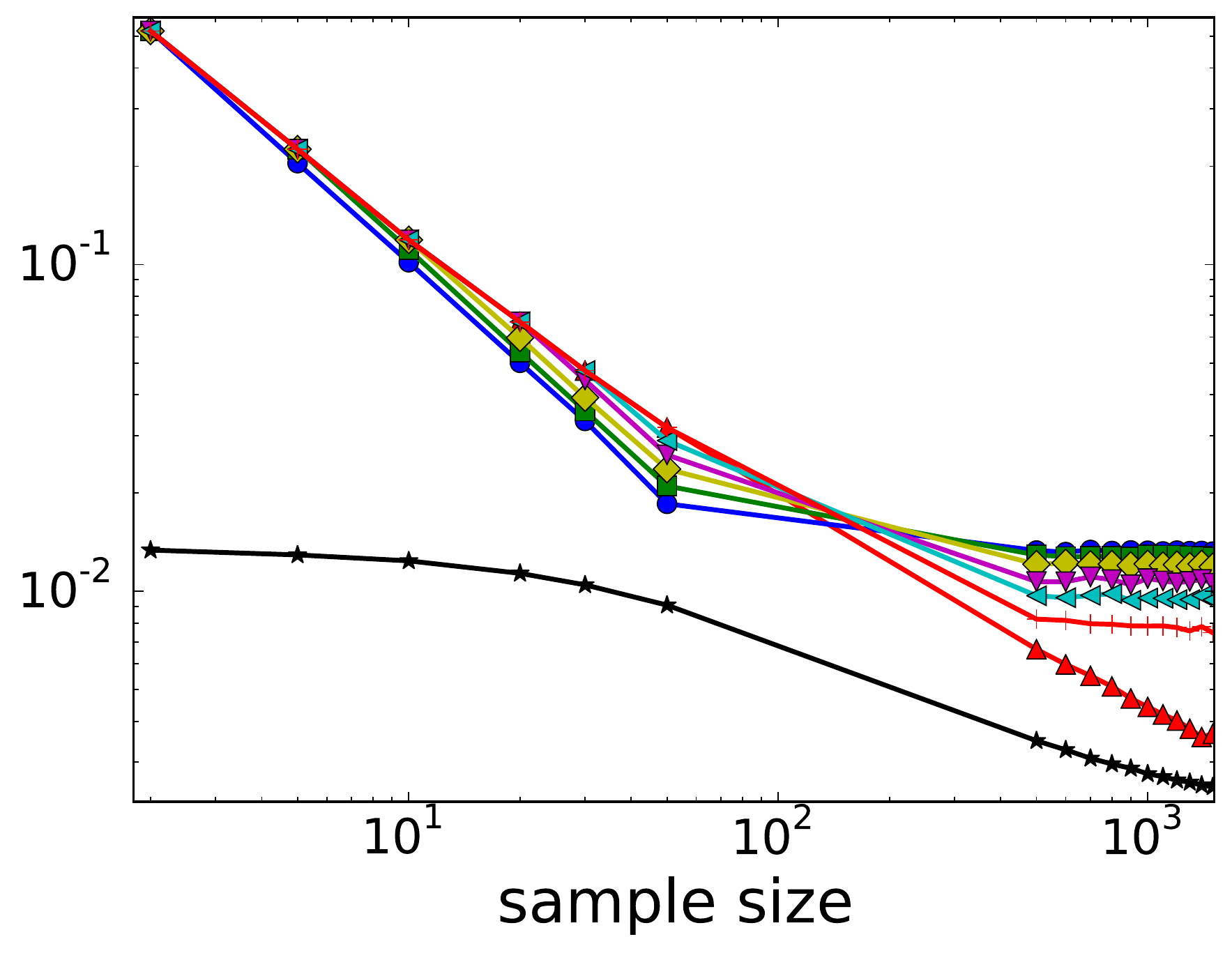}
%
\rotatebox{90}{\tiny \hspace{14mm}\textsf{Train time} (sec)}
\includegraphics[width=\figsize]{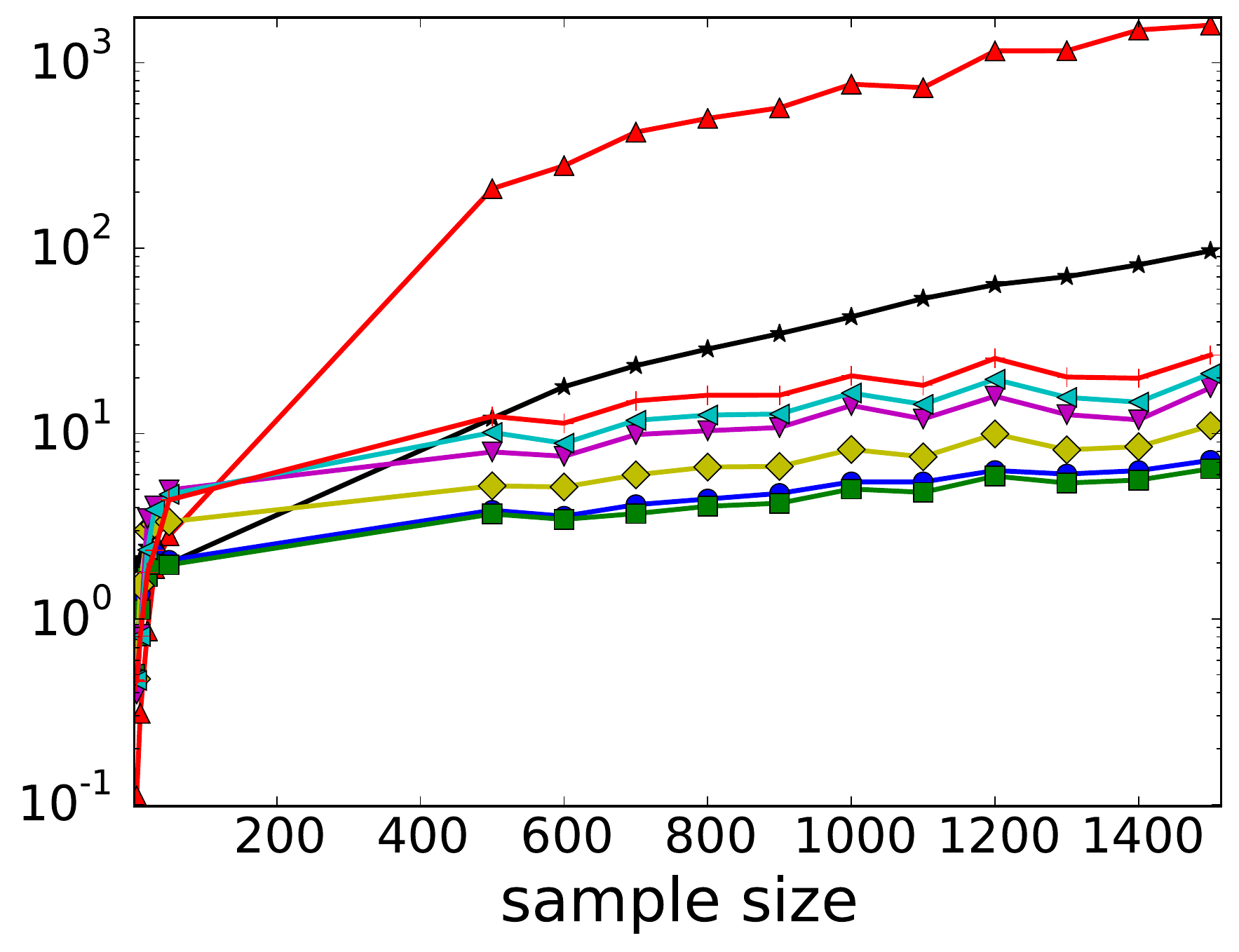}

\rotatebox{90}{\tiny \hspace{10mm}\textsf{Kernel Frobenius Error}}
\includegraphics[width=\figsize]{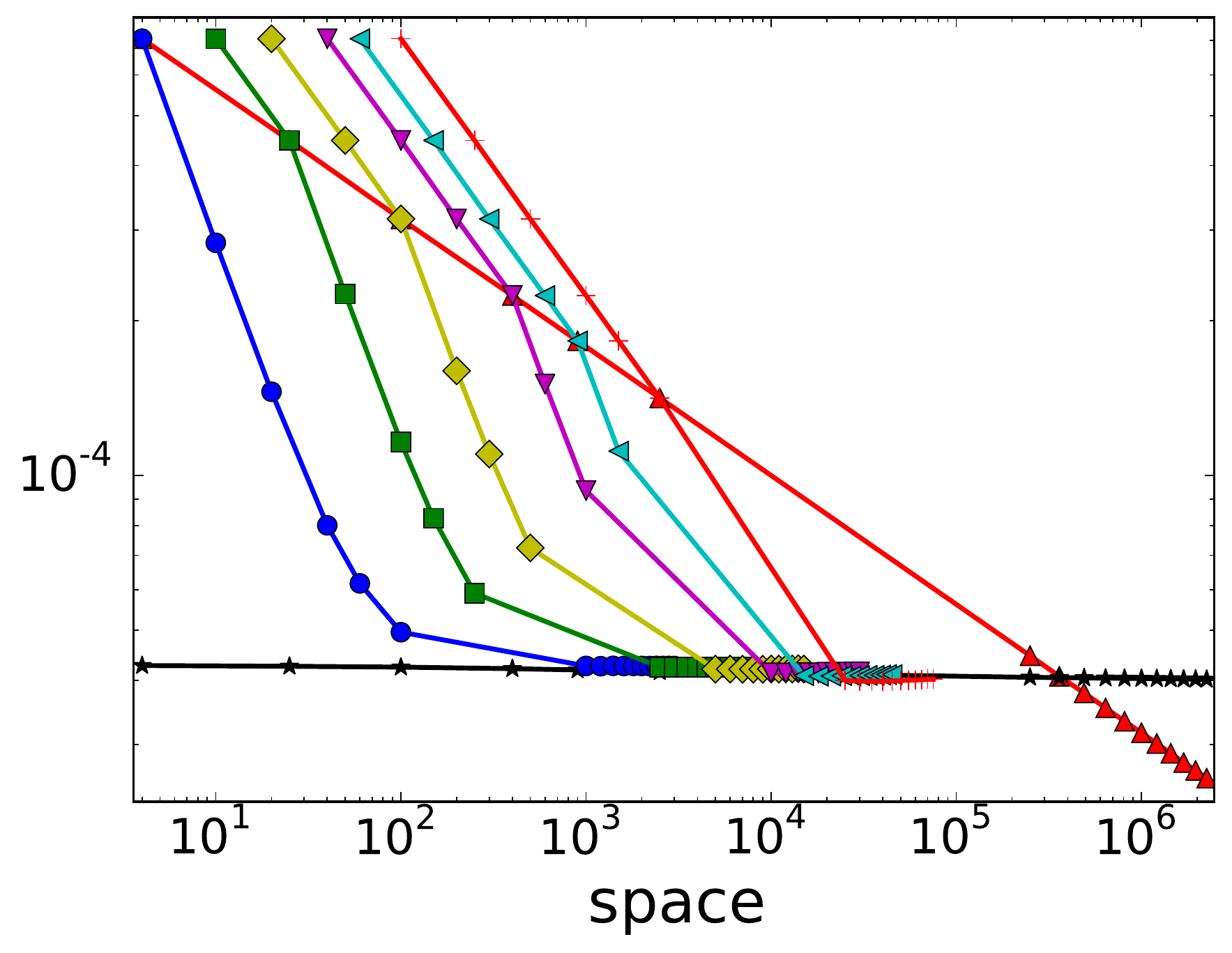}
%
\rotatebox{90}{\tiny \hspace{10mm}\textsf{Kernel Spectral Error}}
\includegraphics[width=\figsize]{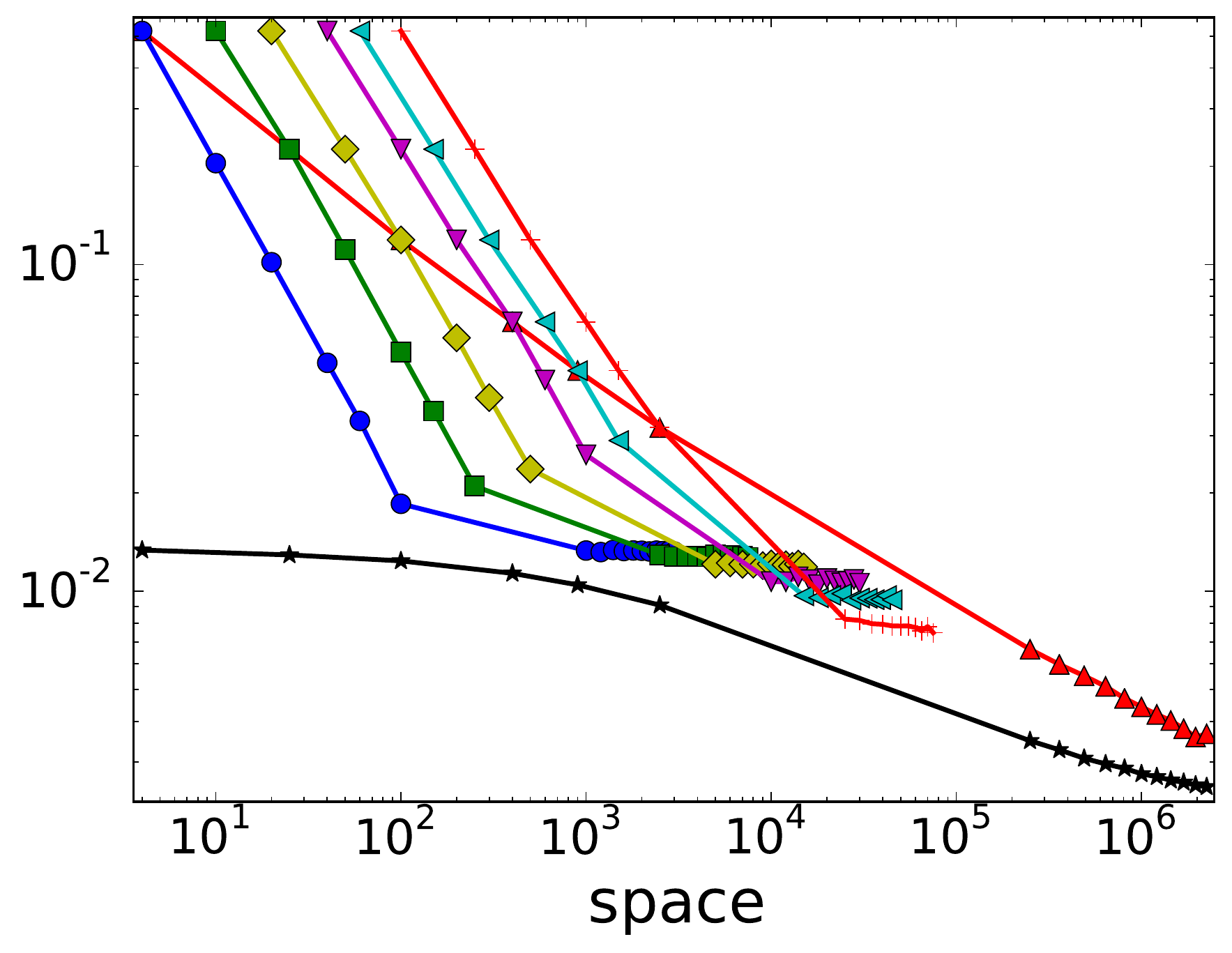}
%
\rotatebox{90}{\tiny \hspace{14mm}\textsf{Test time} (sec)}
\includegraphics[width=\figsize]{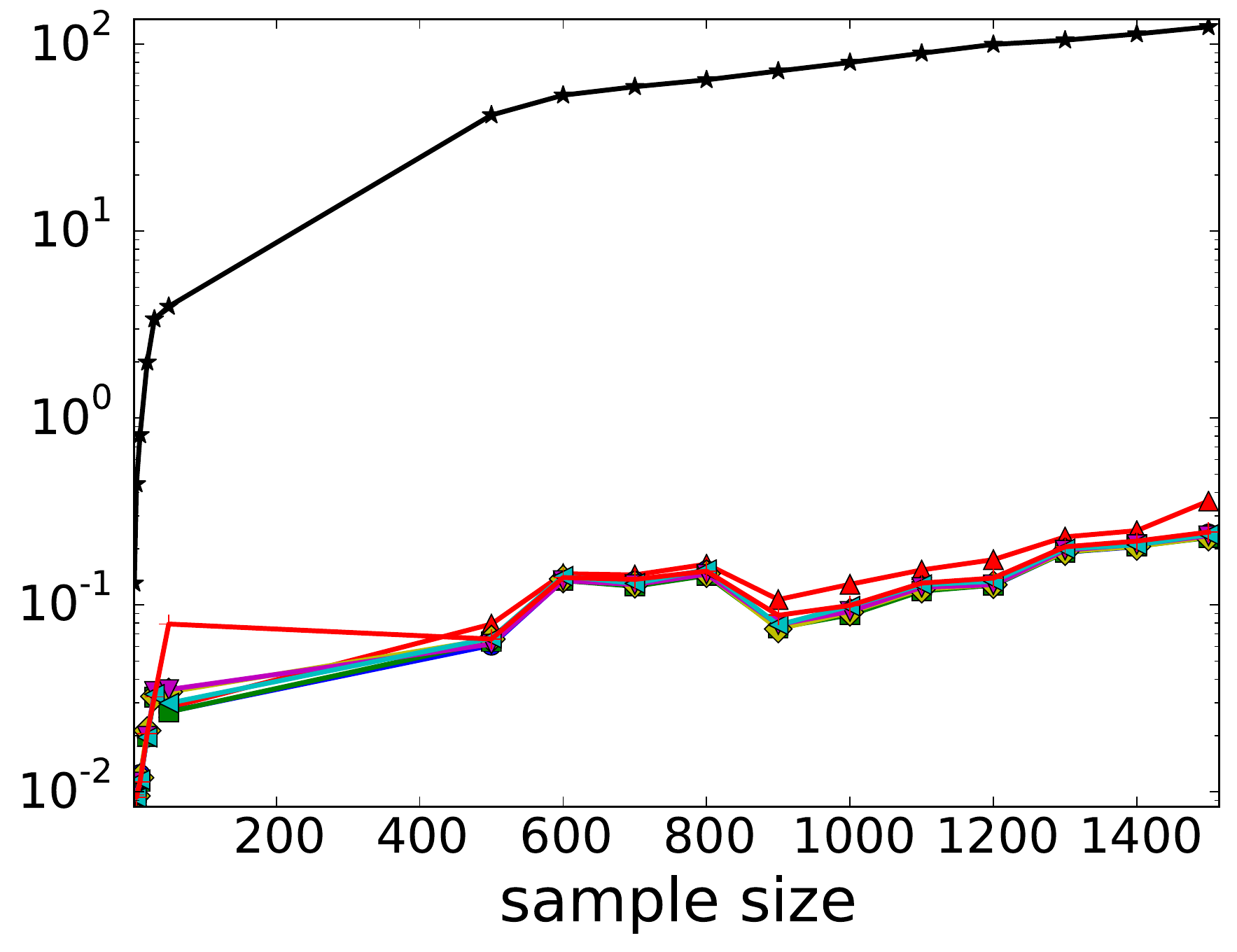}
%

\includegraphics[width=3.3\figsize]{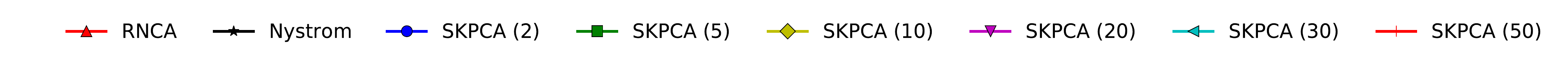}


\caption{{\small \textsc{RandomNoisy} dataset showing Error and Test and Train time versus Sample Size ($m$ or $c$) and Space.}} 


\label{fig:random_noisy}
\end{figure*}
\section{Experiments}
\label{sec:exp}

\label{sec:setup}

We measure the \textsc{Space}, \textsc{Train time}, and \textsc{Test time} of our SKPCA algorithm with $\ell$ taking values $\{2,5,10,20,30,50\}$.  We use spectral and Forbenious error measures and compare against the \Nyst and the RNCA approach using \RFM. 
All methods are implemented in Julia, run on an OpenSUSE 13.1 machine with 80 Intel(R) Xeon(R) 2.40GHz CPU and 750 GB of RAM.

\paragraph{Data sets.}
We run experiments on several real (\textsc{CPU, Adult, Forest}) and synthetic (\textsc{RandomNoisy}) data sets. 
Each data set is an $n \times d$ matrix $A$ (\textsc{CPU} is $7373 \times 21$, \textsc{Forest} is $523910 \times 54$, \textsc{Adult} is $33561 \times 123$, and \textsc{RandomNoisy} is $20000 \times 1000$) with $n$ data points and $d$ attributes.  
For each data set, a random subset is removed as the test set of size $1000$, except \textsc{CPU} where the test set size is $800$.
We generate the \textsc{RandomNoisy} synthetic dataset using the approach by \cite{l13}.  We create $A = SDU + F/\zeta$, where $SDU$ is an $s$-dimensional signal matrix (for $s<d$) and $F/\zeta$ is (full) $d$-dimensional noise with $\zeta$ controlling the signal to noise ratio.  Each entry $F_{i,j}$ of $F$ is generated i.i.d. from a normal distribution $N(0,1)$, and we use $\zeta=10$.    
For the signal matrix, $S \in \R^{n \times s}$ again we generate each $S_{i,j} \sim N(0,1)$ i.i.d;  $D$ is diagonal with entries $D_{i,i} = 1-(i-1)/d$ linearly decreasing; and $U \in \R^{s \times d}$ is just a random rotation.  We set $s=50$.

\begin{figure*}[t!]

\rotatebox{90}{\tiny \hspace{10mm}\textsf{Kernel Frobenius Error}} 
\includegraphics[width=\figsize]{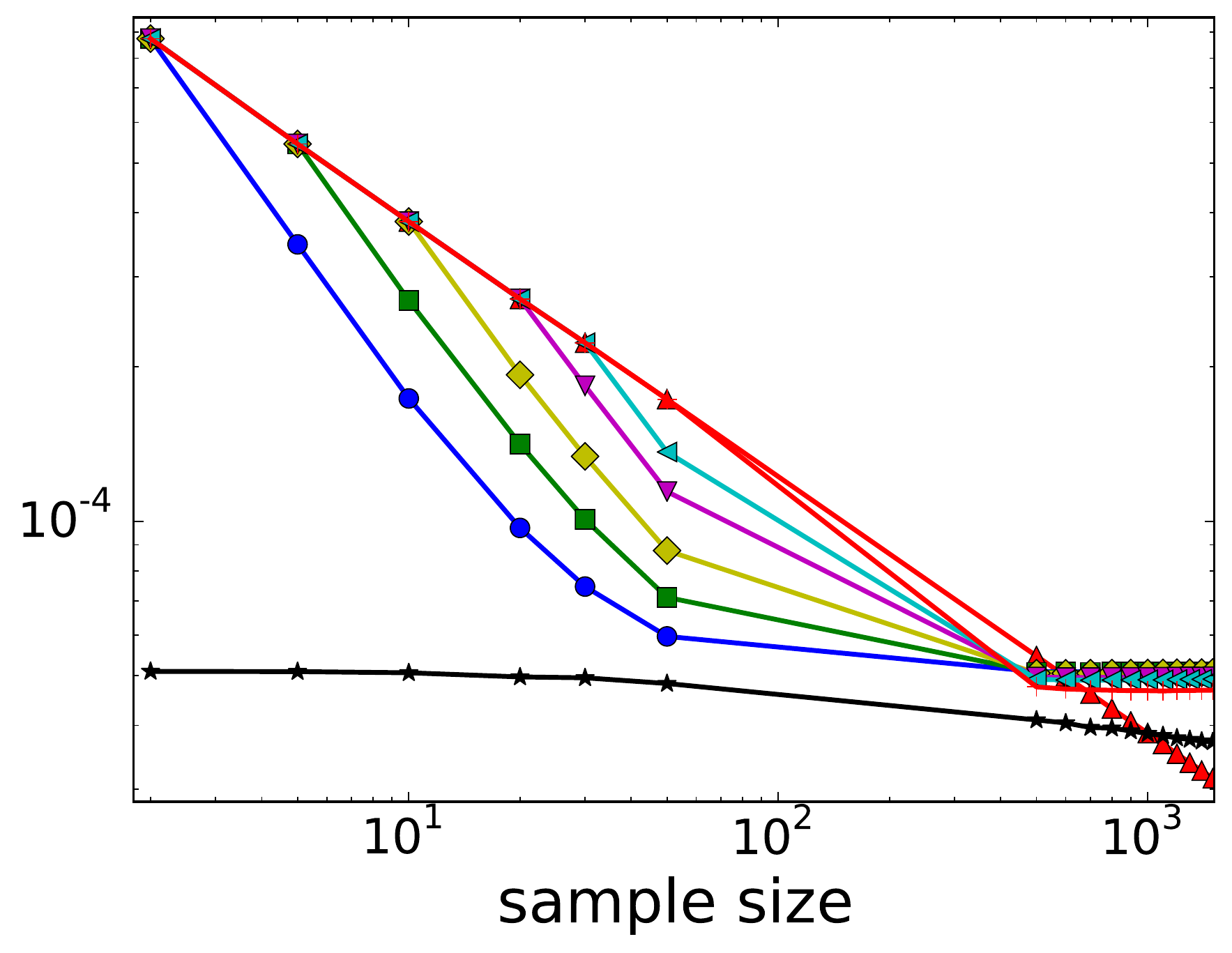}
%
\rotatebox{90}{\tiny \hspace{10mm}\textsf{Kernel Spectral Error}}
\includegraphics[width=\figsize]{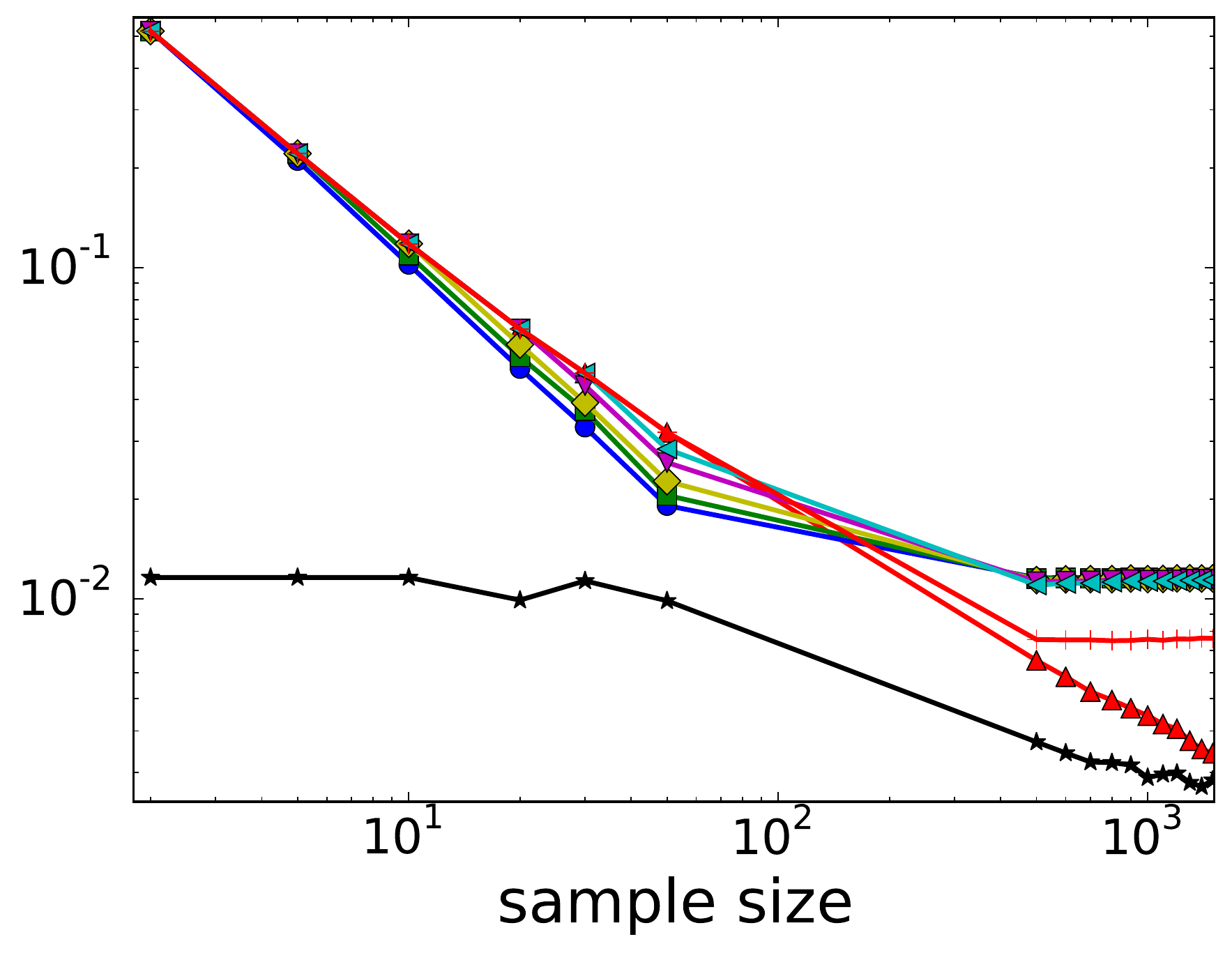}
%
\rotatebox{90}{\tiny \hspace{14mm}\textsf{Train time} (sec)}
\includegraphics[width=\figsize]{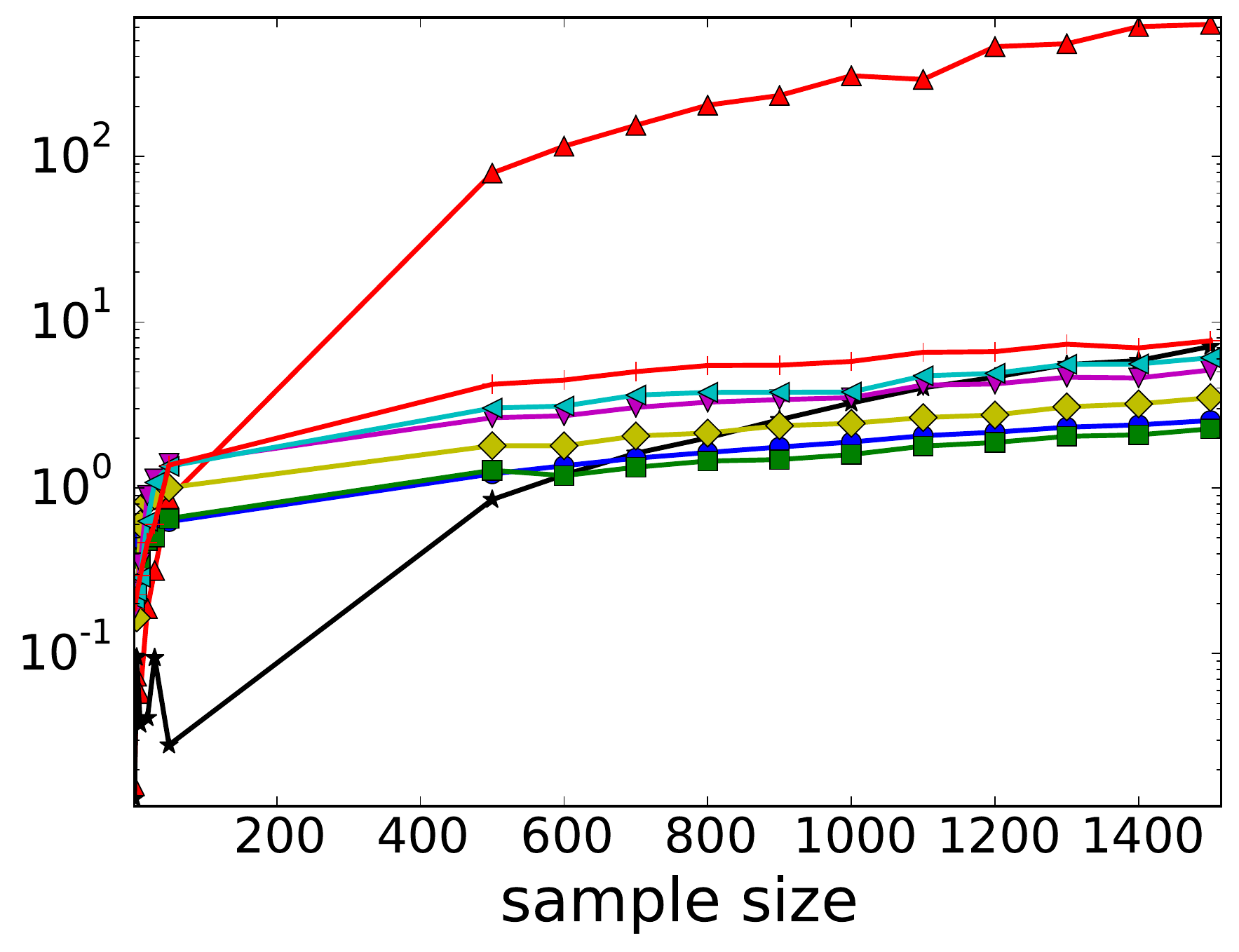}

\rotatebox{90}{\tiny \hspace{10mm}\textsf{Kernel Frobenius Error}}
\includegraphics[width=\figsize]{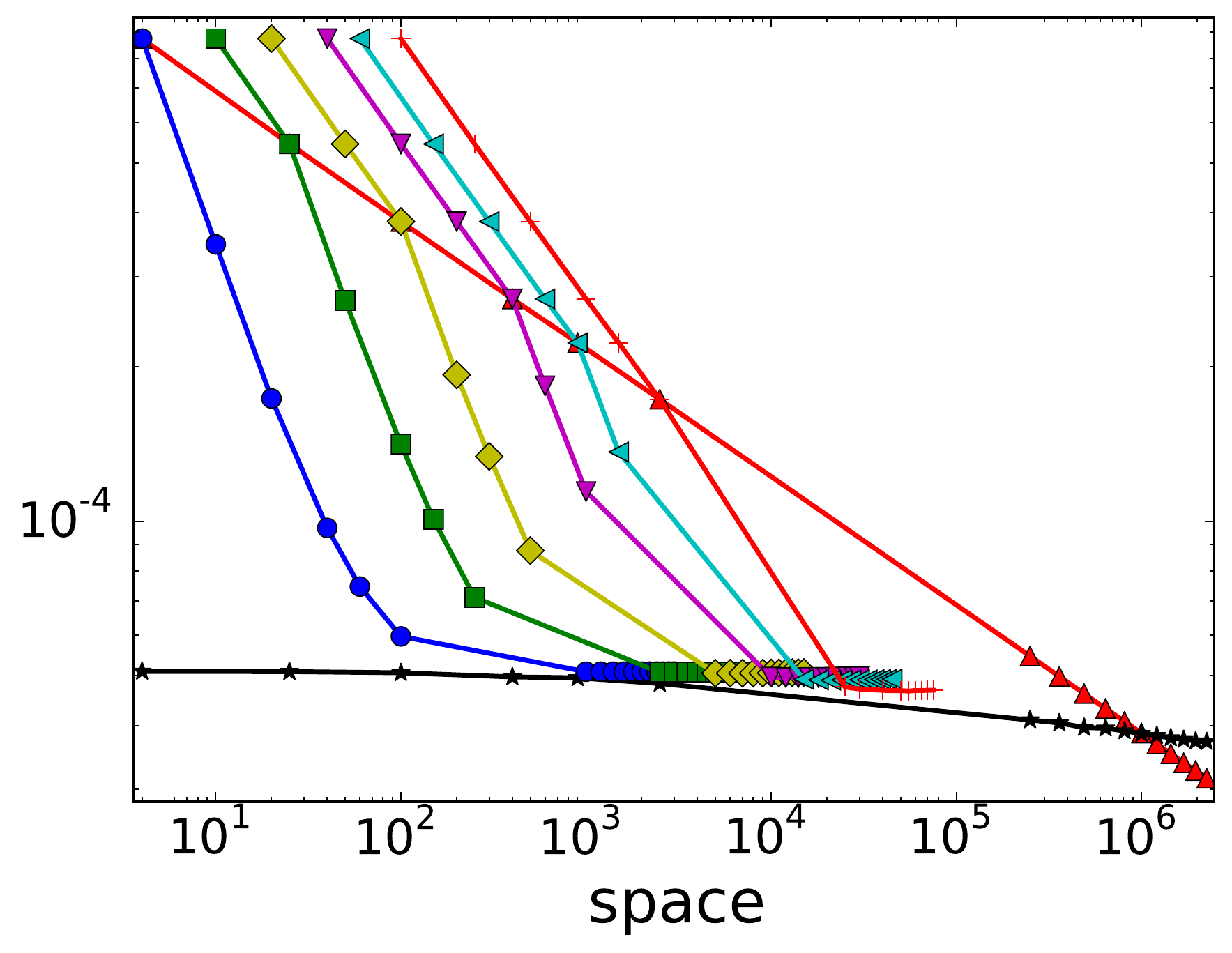}
%
\rotatebox{90}{\tiny \hspace{10mm}\textsf{Kernel Spectral Error}}
\includegraphics[width=\figsize]{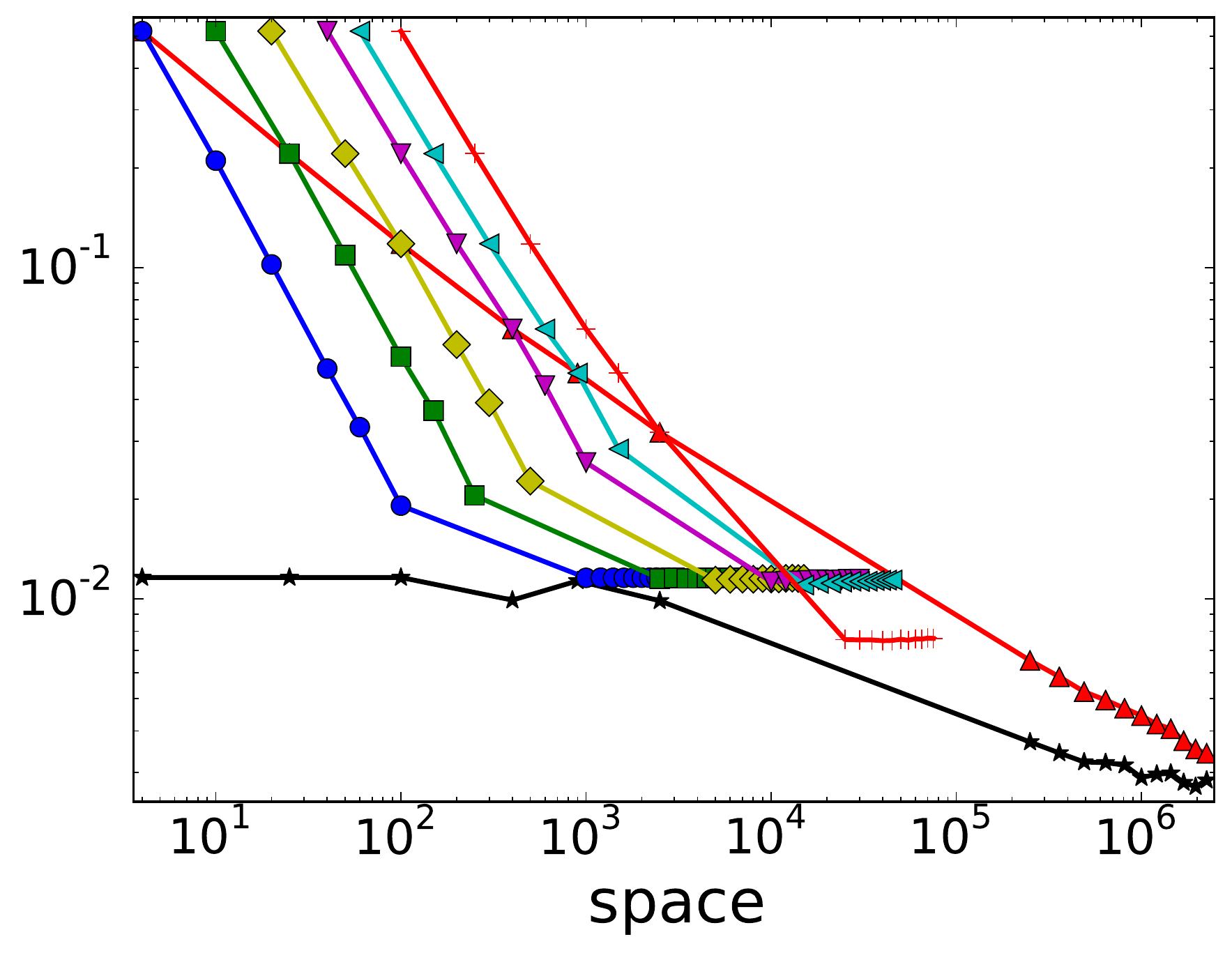}
%
\rotatebox{90}{\tiny \hspace{14mm}\textsf{Test time} (sec)}
\includegraphics[width=\figsize]{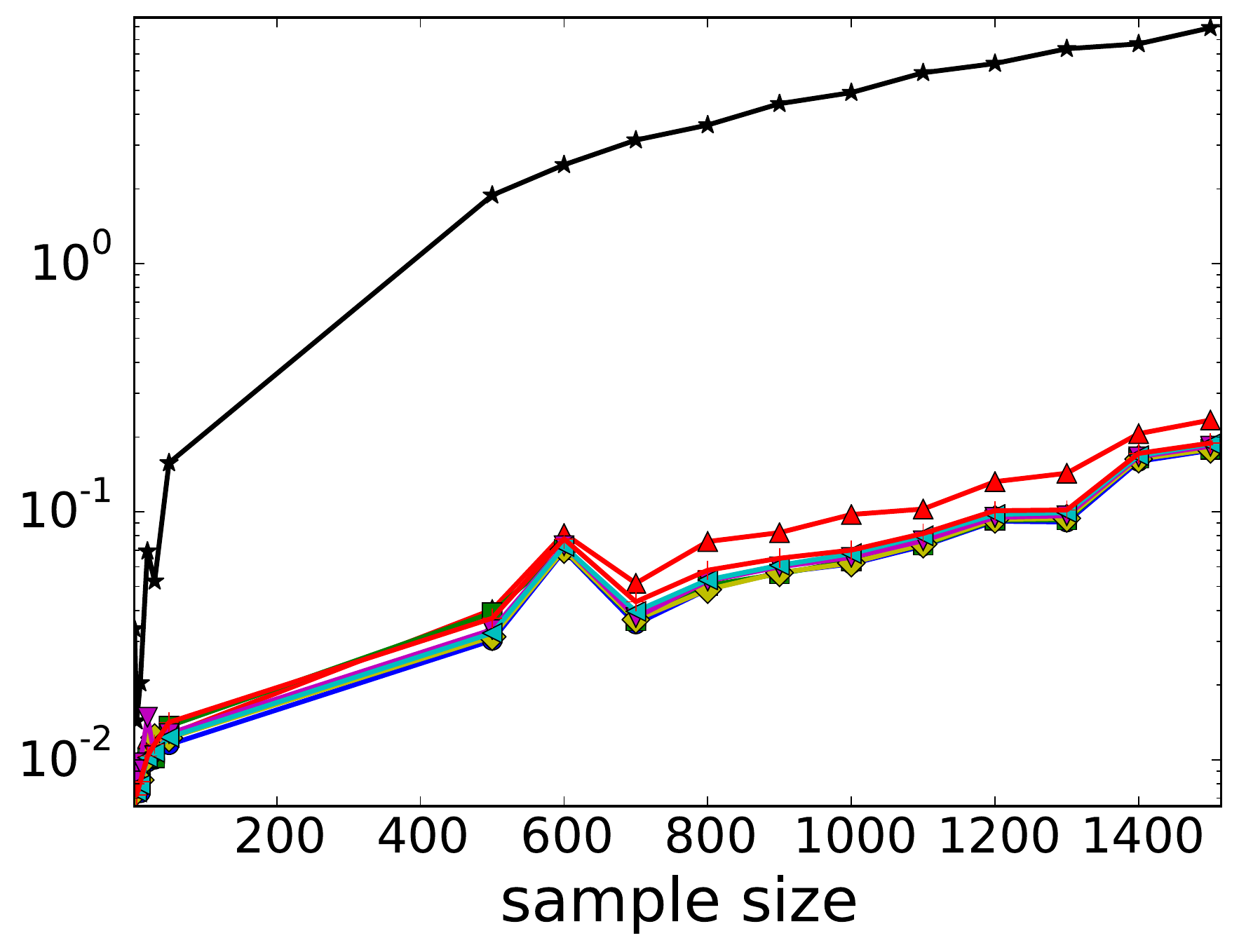}
%

\includegraphics[width=3.3\figsize]{hlegend.pdf}


\caption{{\small \textsc{CPU} dataset showing Error and Test and Train time versus Sample Size ($m$ or $c$) and Space.}} 


\label{fig:cpu}
\end{figure*}

\begin{figure*}[t!]
\rotatebox{90}{\tiny \hspace{10mm}\textsf{Kernel Frobenius Error}} 
\includegraphics[width=\figsize]{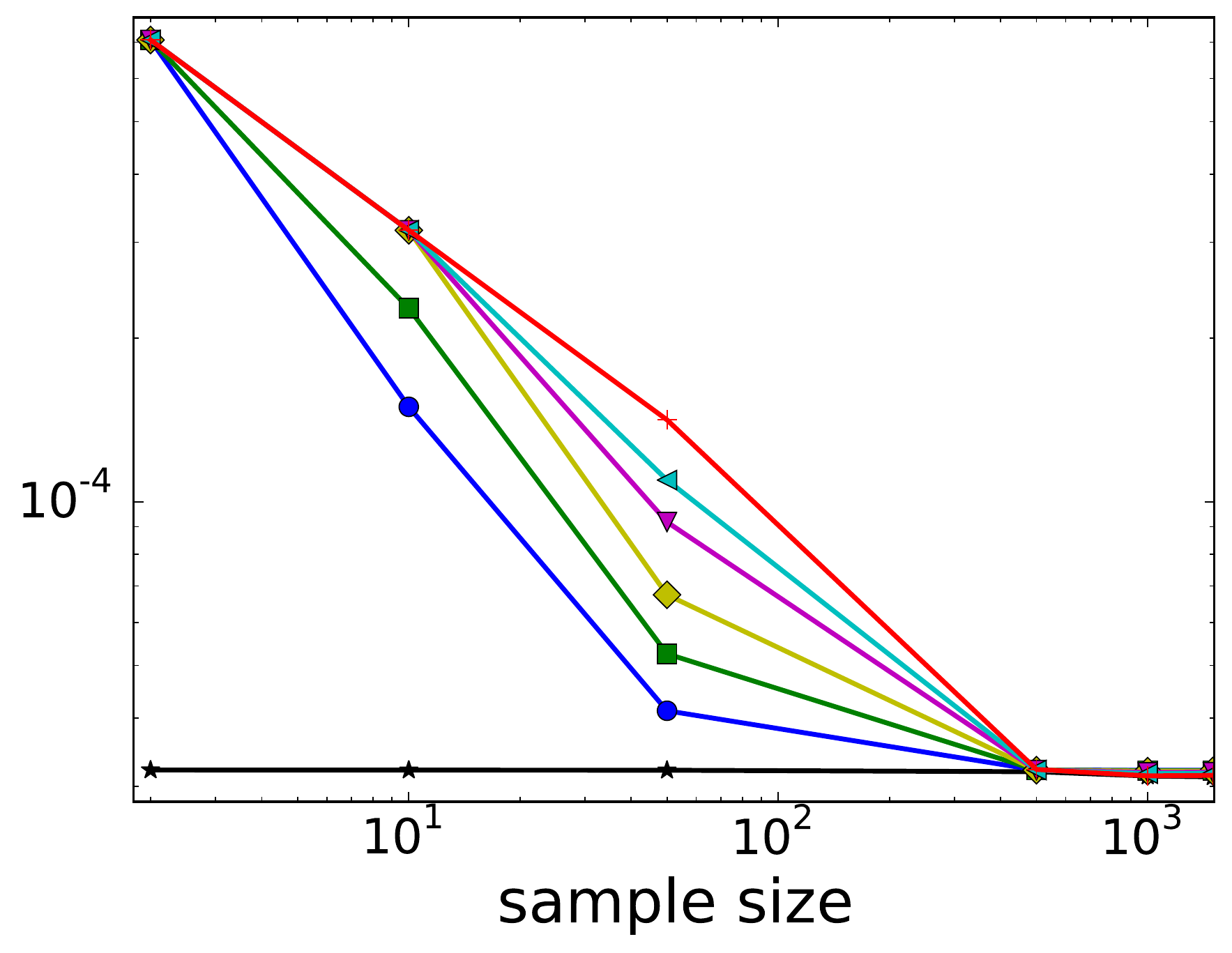}
%
\rotatebox{90}{\tiny \hspace{10mm}\textsf{Kernel Spectral Error}}
\includegraphics[width=\figsize]{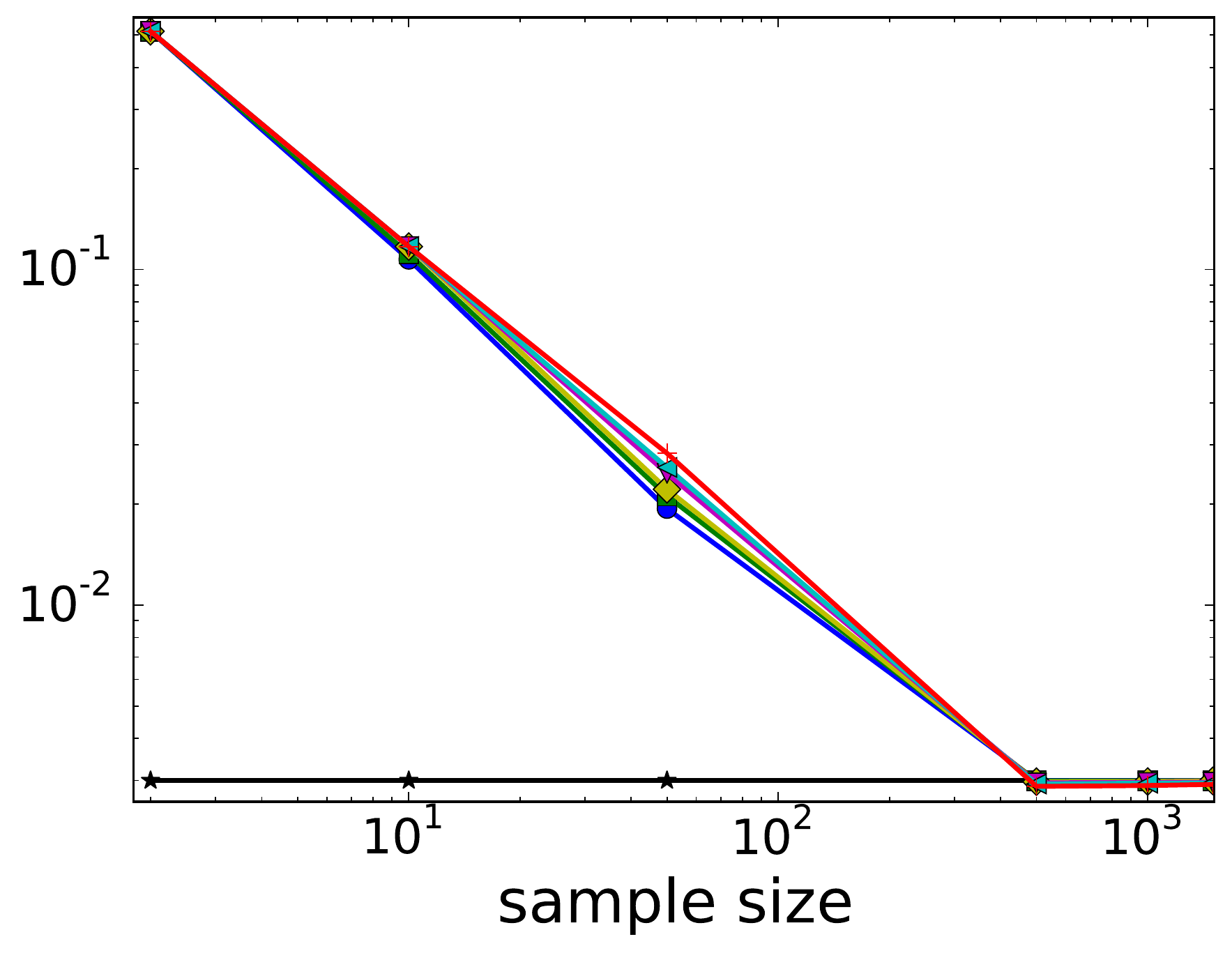}
%
\rotatebox{90}{\tiny \hspace{14mm}\textsf{Train time} (sec)}
\includegraphics[width=\figsize]{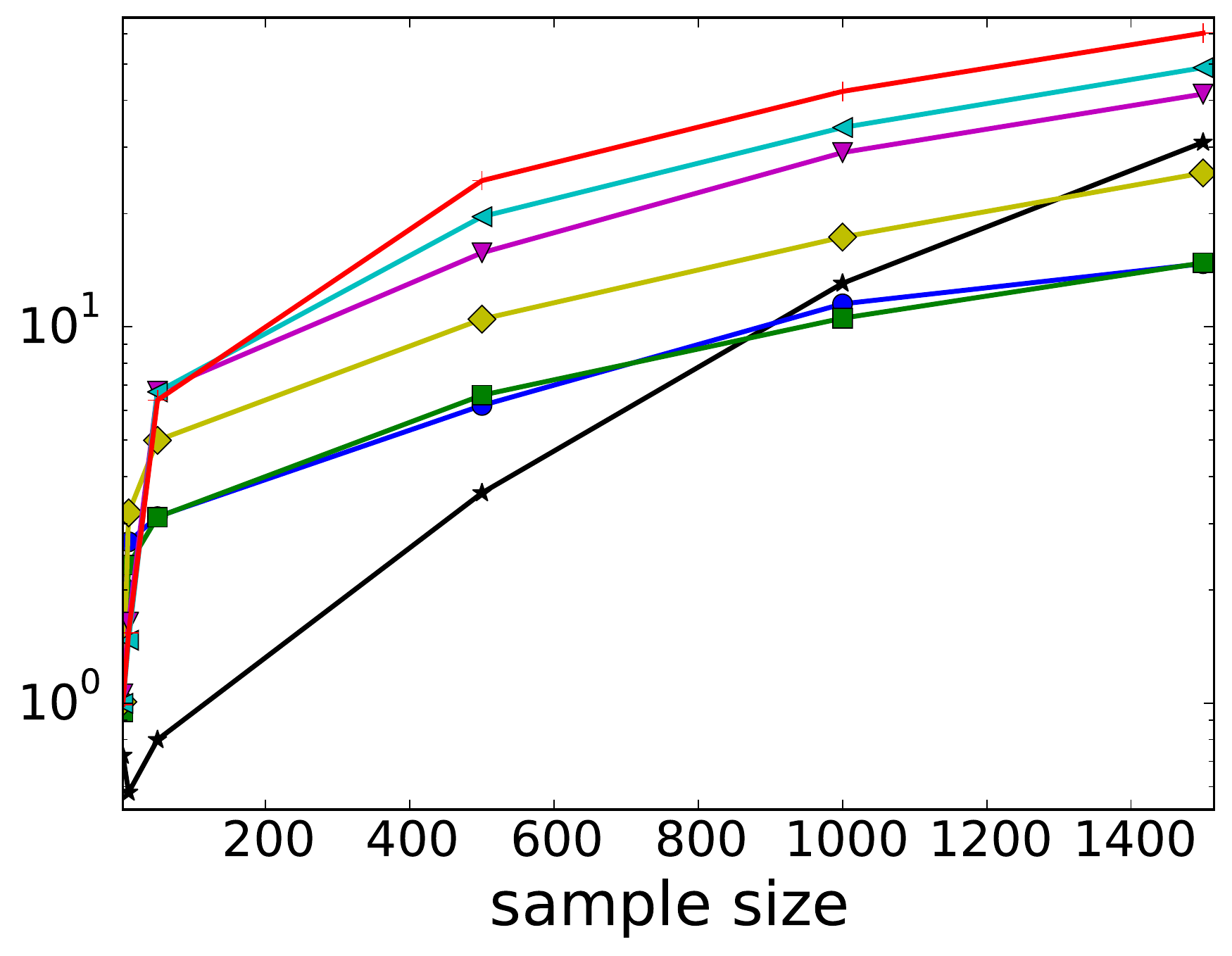}

\rotatebox{90}{\tiny \hspace{10mm}\textsf{Kernel Frobenius Error}}
\includegraphics[width=\figsize]{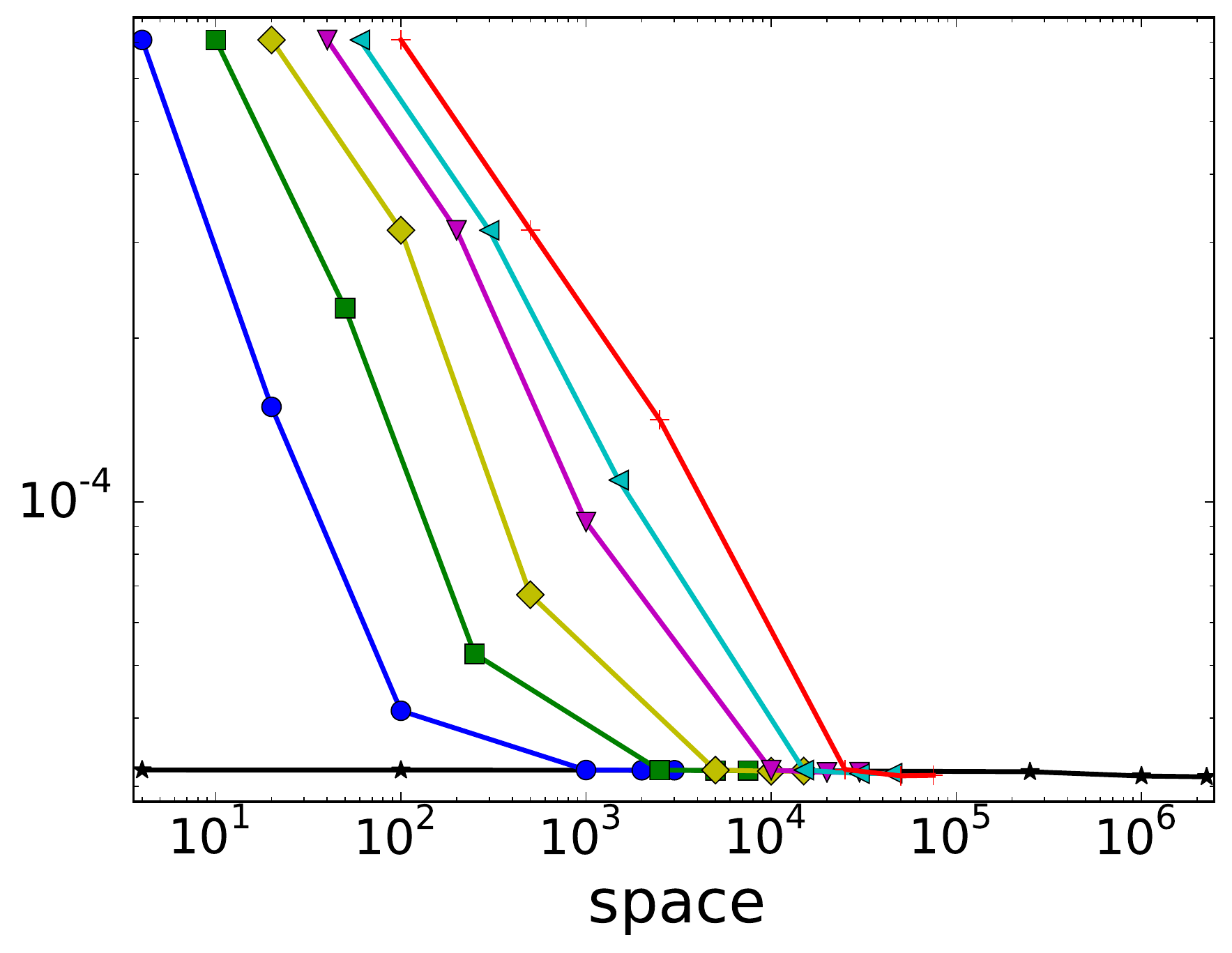}
%
\rotatebox{90}{\tiny \hspace{10mm}\textsf{Kernel Spectral Error}}
\includegraphics[width=\figsize]{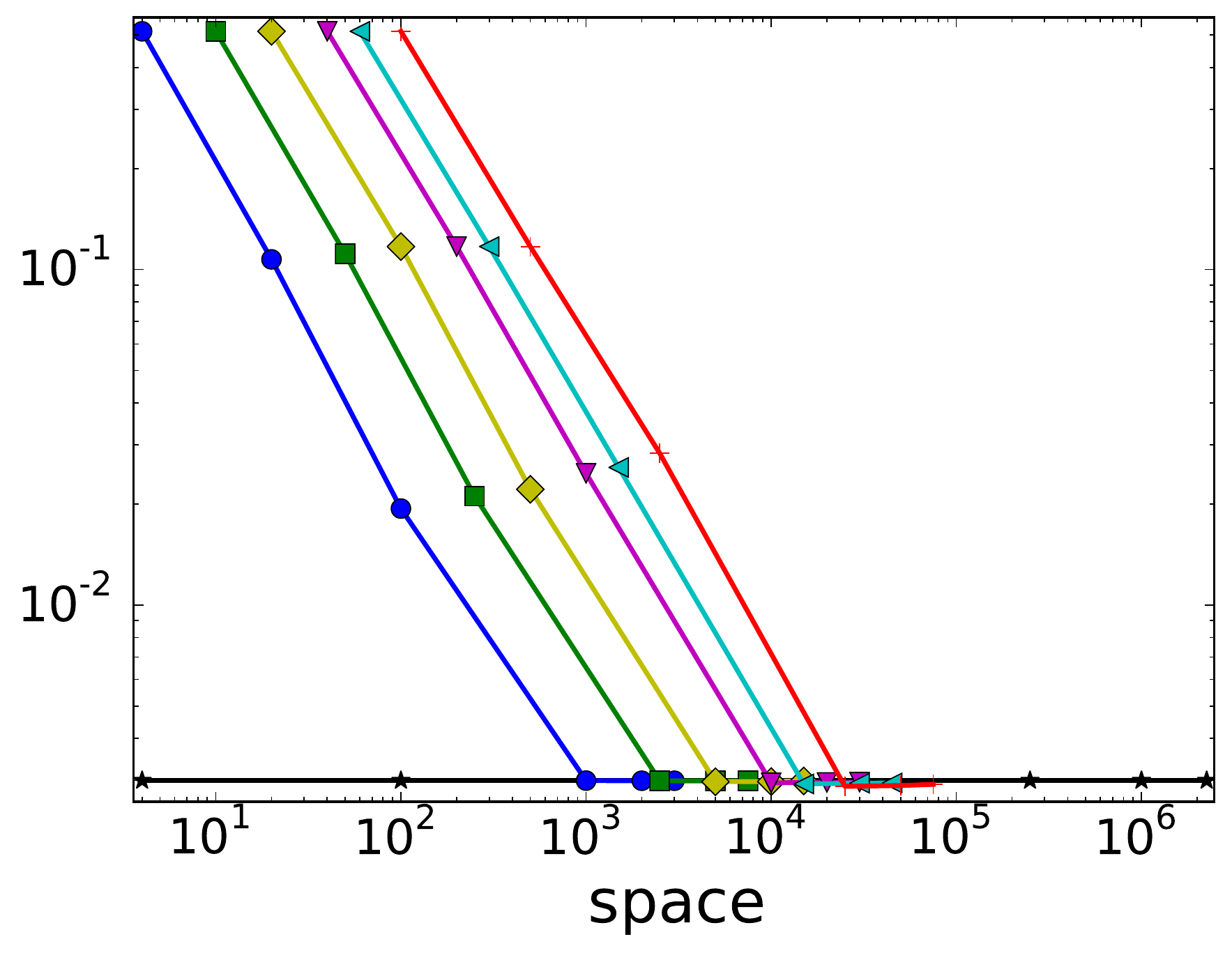}
%
\rotatebox{90}{\tiny \hspace{14mm}\textsf{Test time} (sec)}
\includegraphics[width=\figsize]{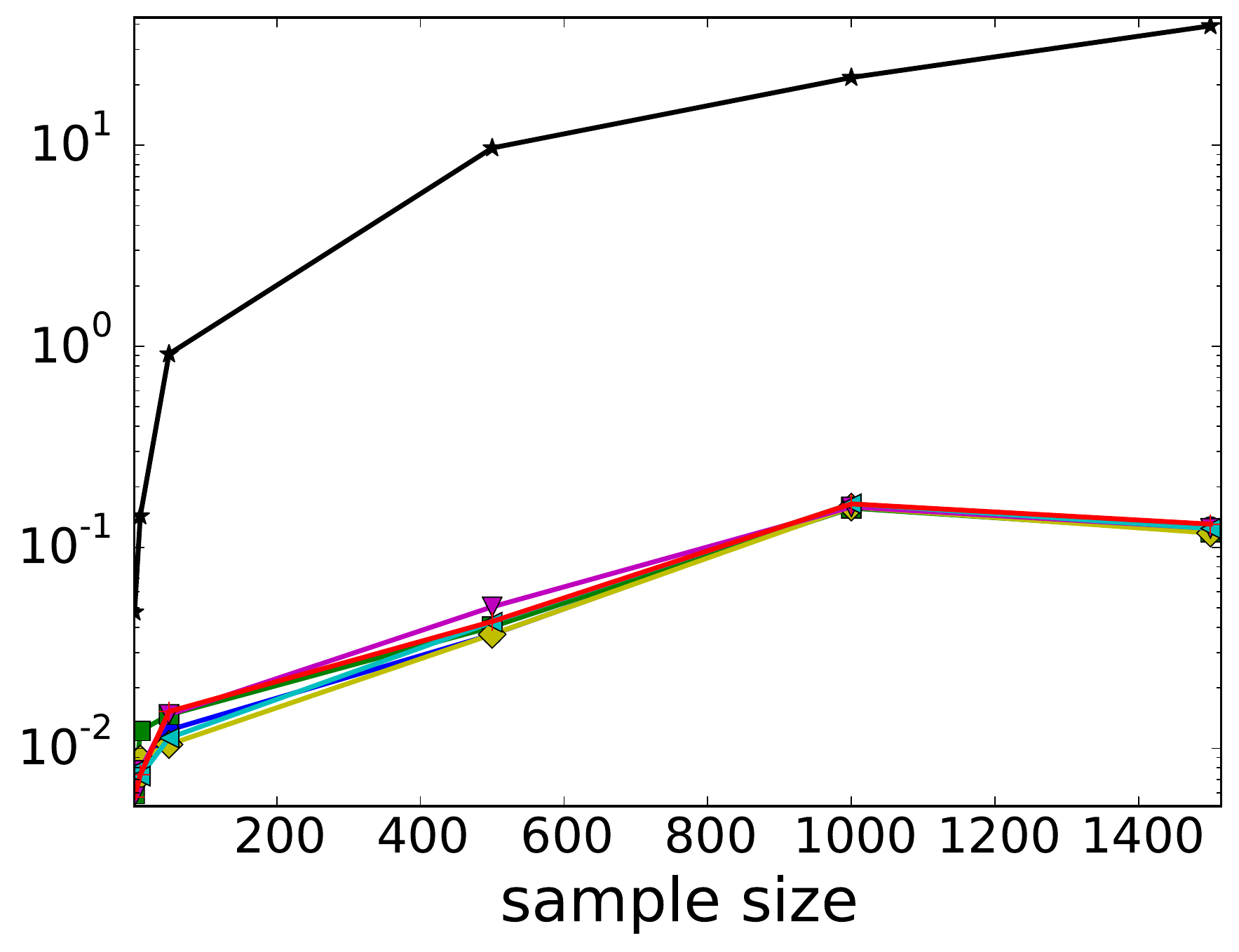}
%

\includegraphics[width=3.3\figsize]{hlegend.pdf}


\caption{{\small \textsc{Adult} dataset showing Error and Test and Train time versus Sample Size ($m$ or $c$) and Space.}} 



\label{fig:adult}
\end{figure*}

\begin{figure*}[t!]
\rotatebox{90}{\tiny \hspace{10mm}\textsf{Kernel Frobenius Error}} 
\includegraphics[width=\figsize]{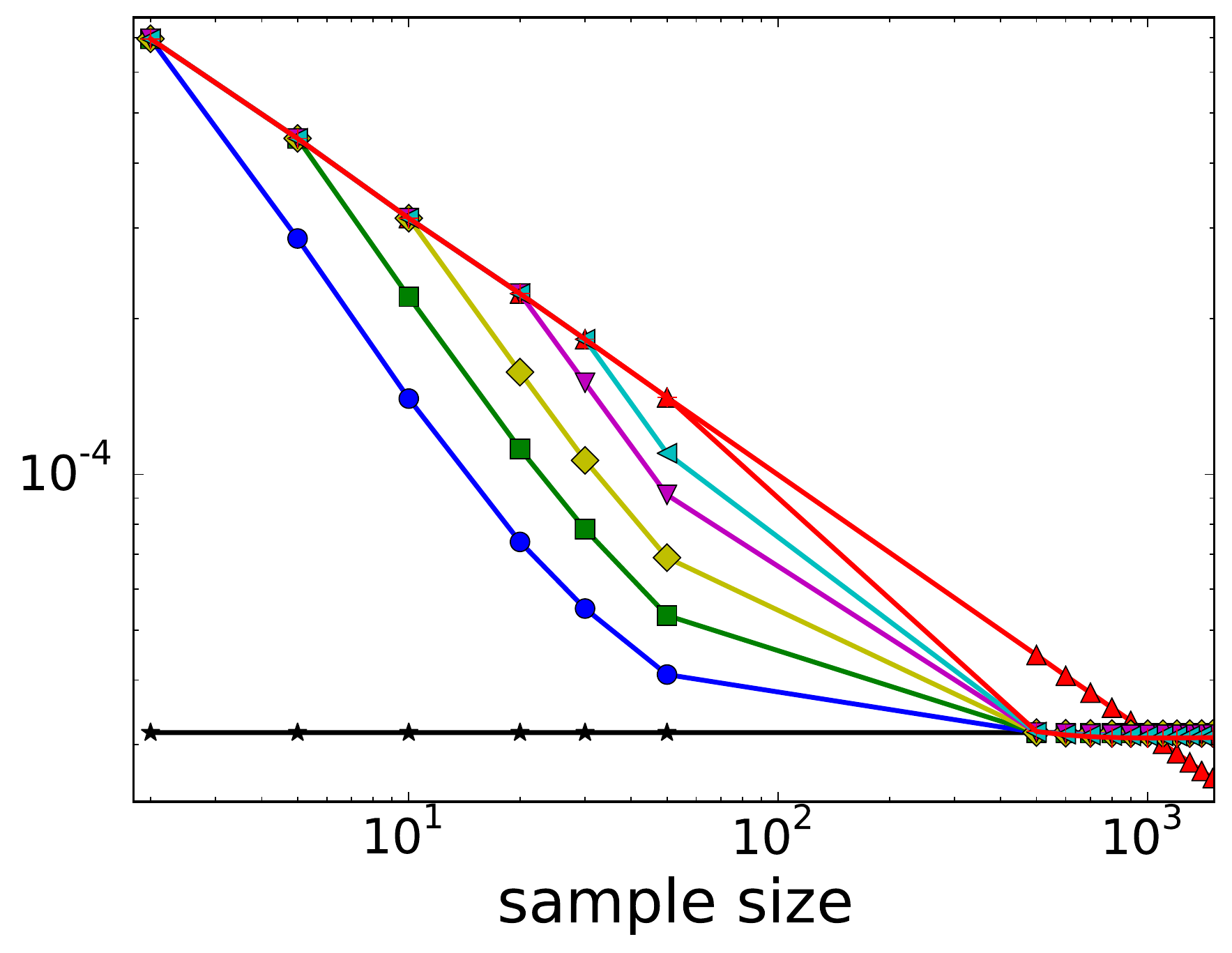}
%
\rotatebox{90}{\tiny \hspace{10mm}\textsf{Kernel Spectral Error}}
\includegraphics[width=\figsize]{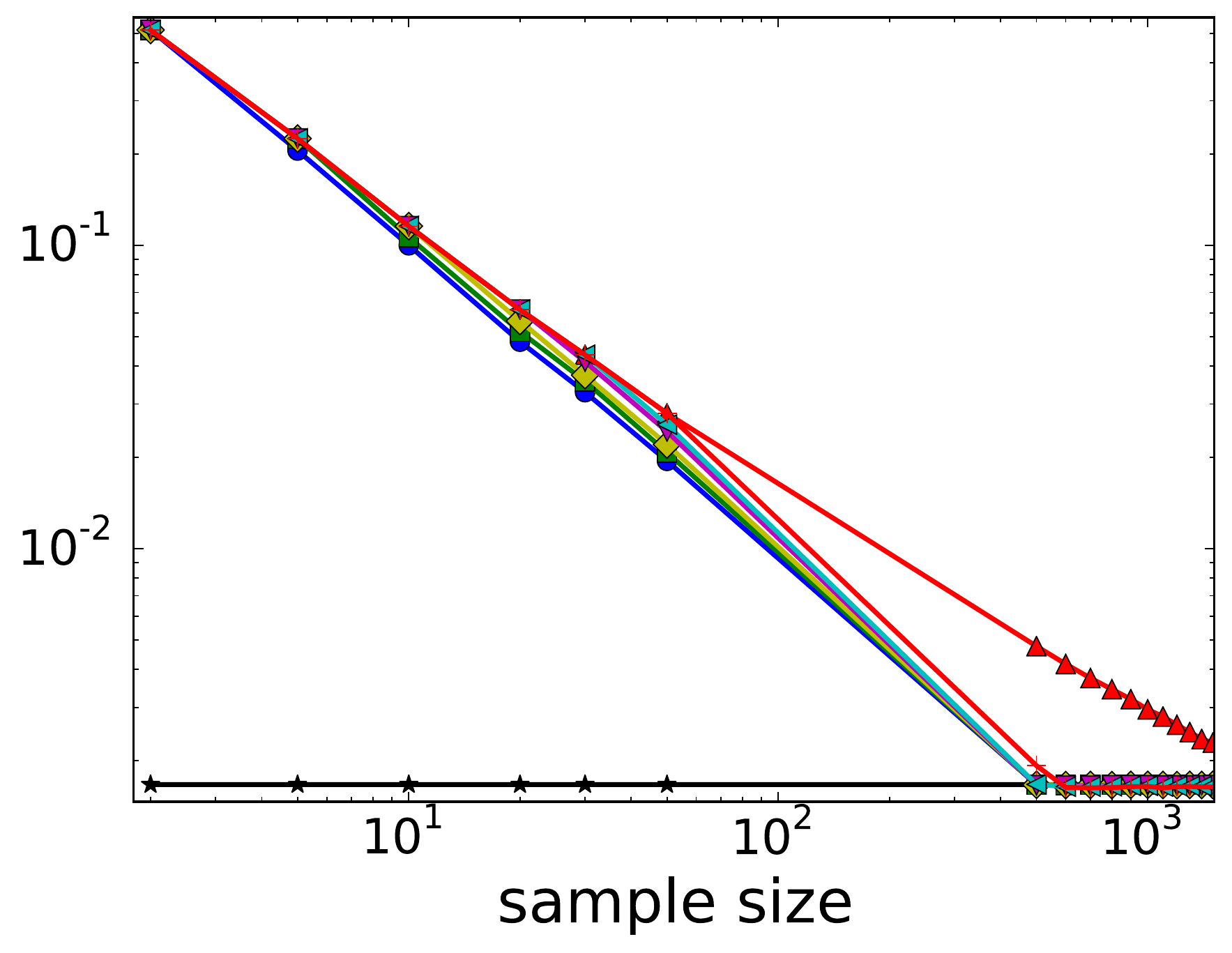}
%
\rotatebox{90}{\tiny \hspace{14mm}\textsf{Train time} (sec)}
\includegraphics[width=\figsize]{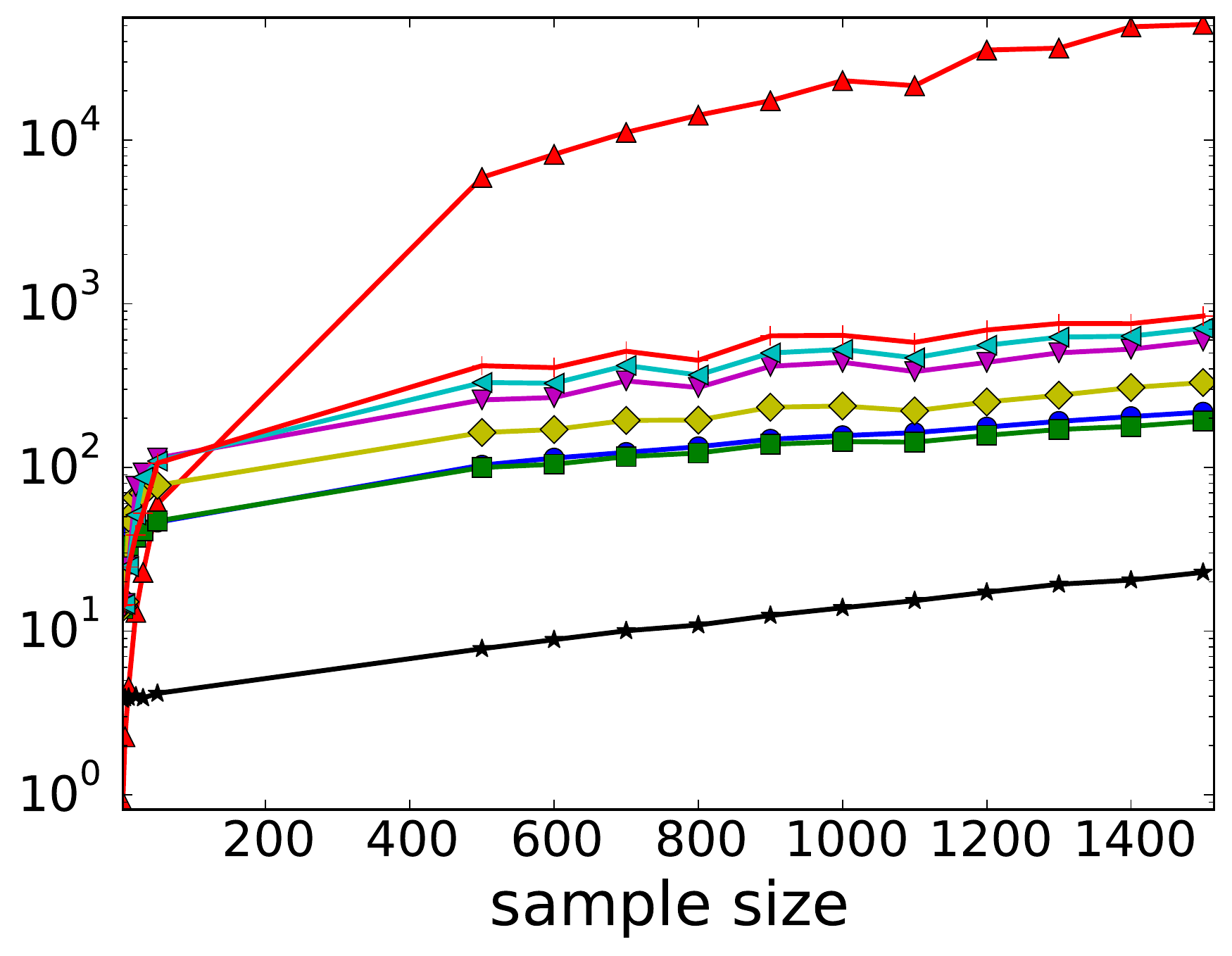}

\rotatebox{90}{\tiny \hspace{10mm}\textsf{Kernel Frobenius Error}}
\includegraphics[width=\figsize]{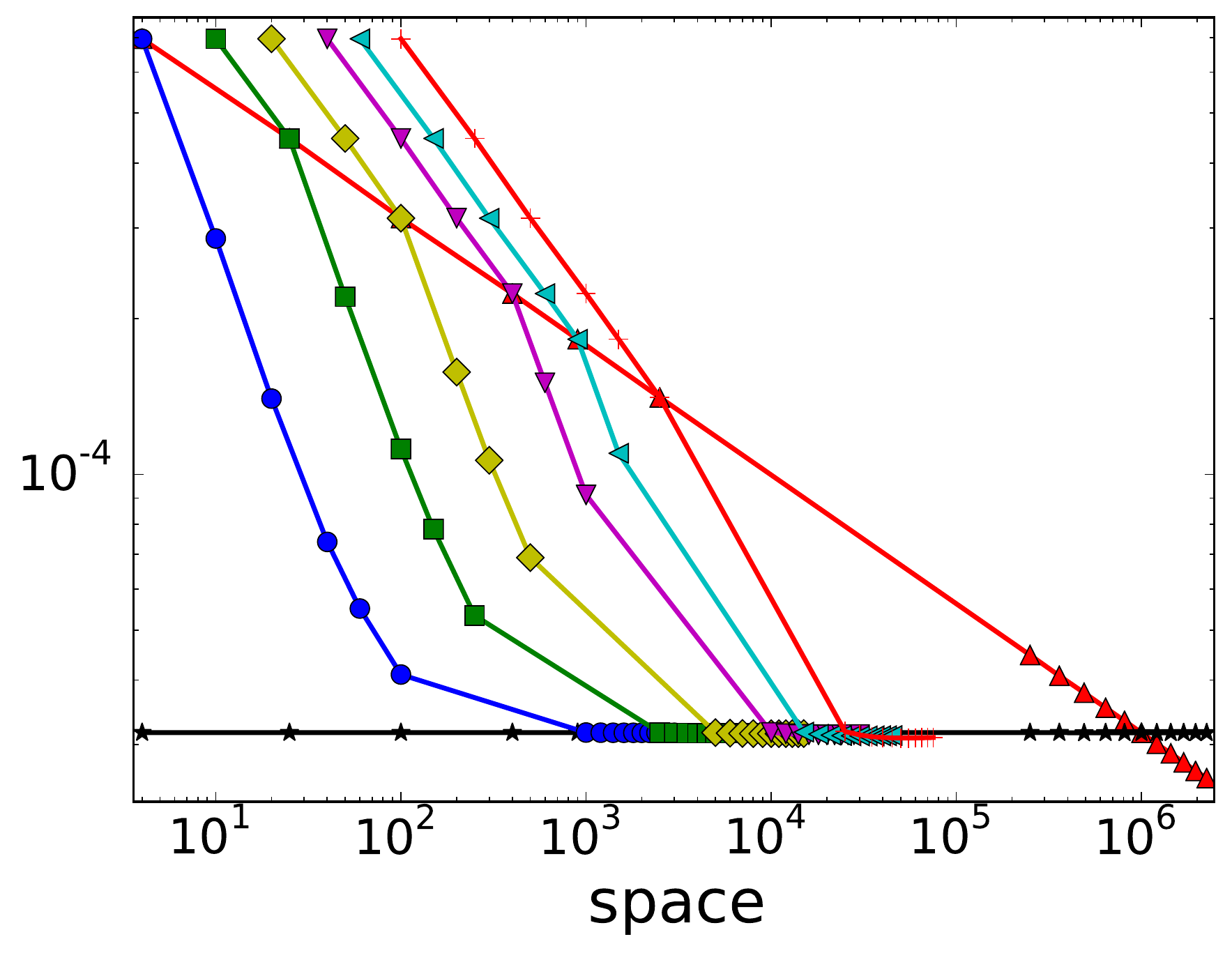}
%
\rotatebox{90}{\tiny \hspace{10mm}\textsf{Kernel Spectral Error}}
\includegraphics[width=\figsize]{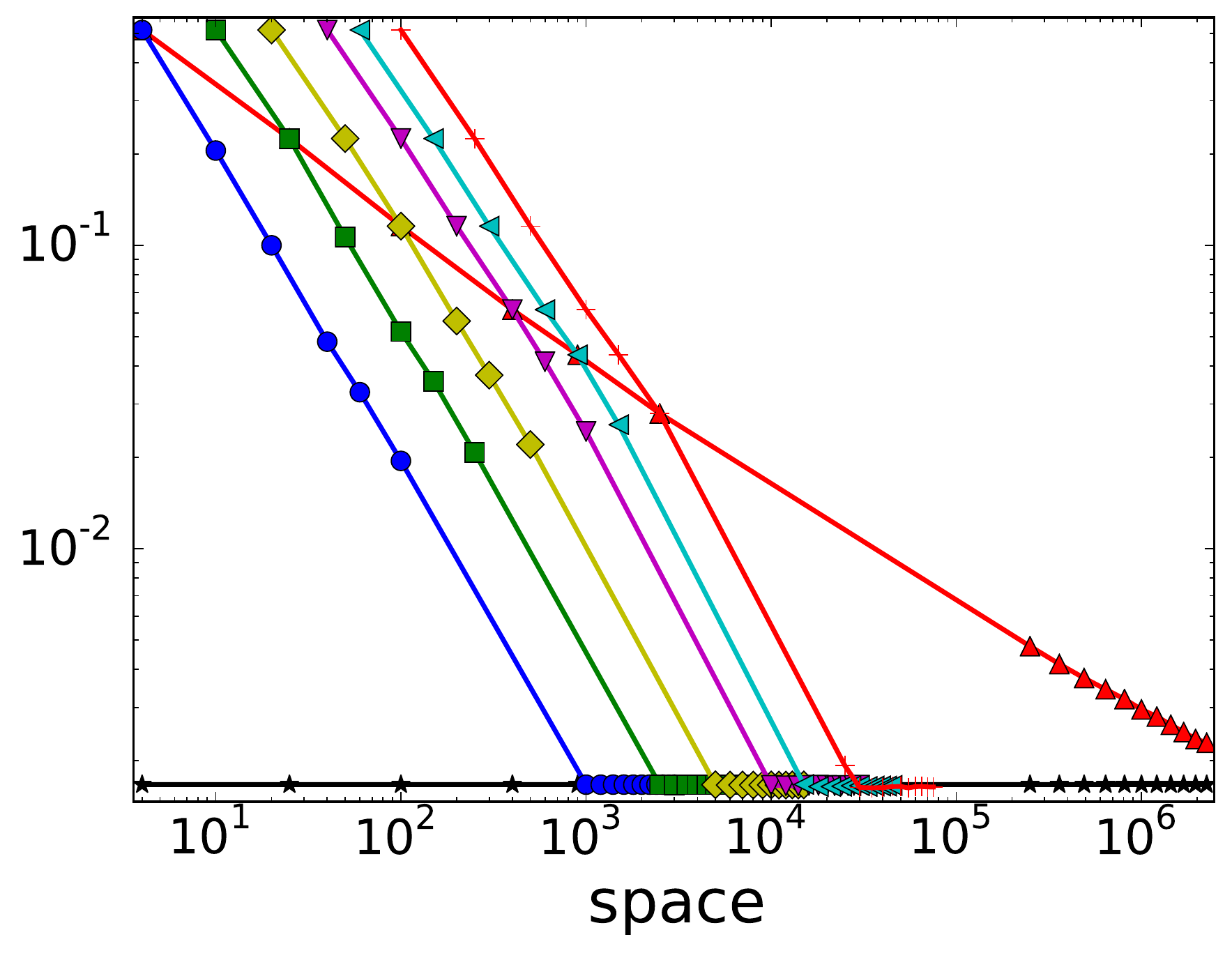}
%
\rotatebox{90}{\tiny \hspace{14mm}\textsf{Test time} (sec)}
\includegraphics[width=\figsize]{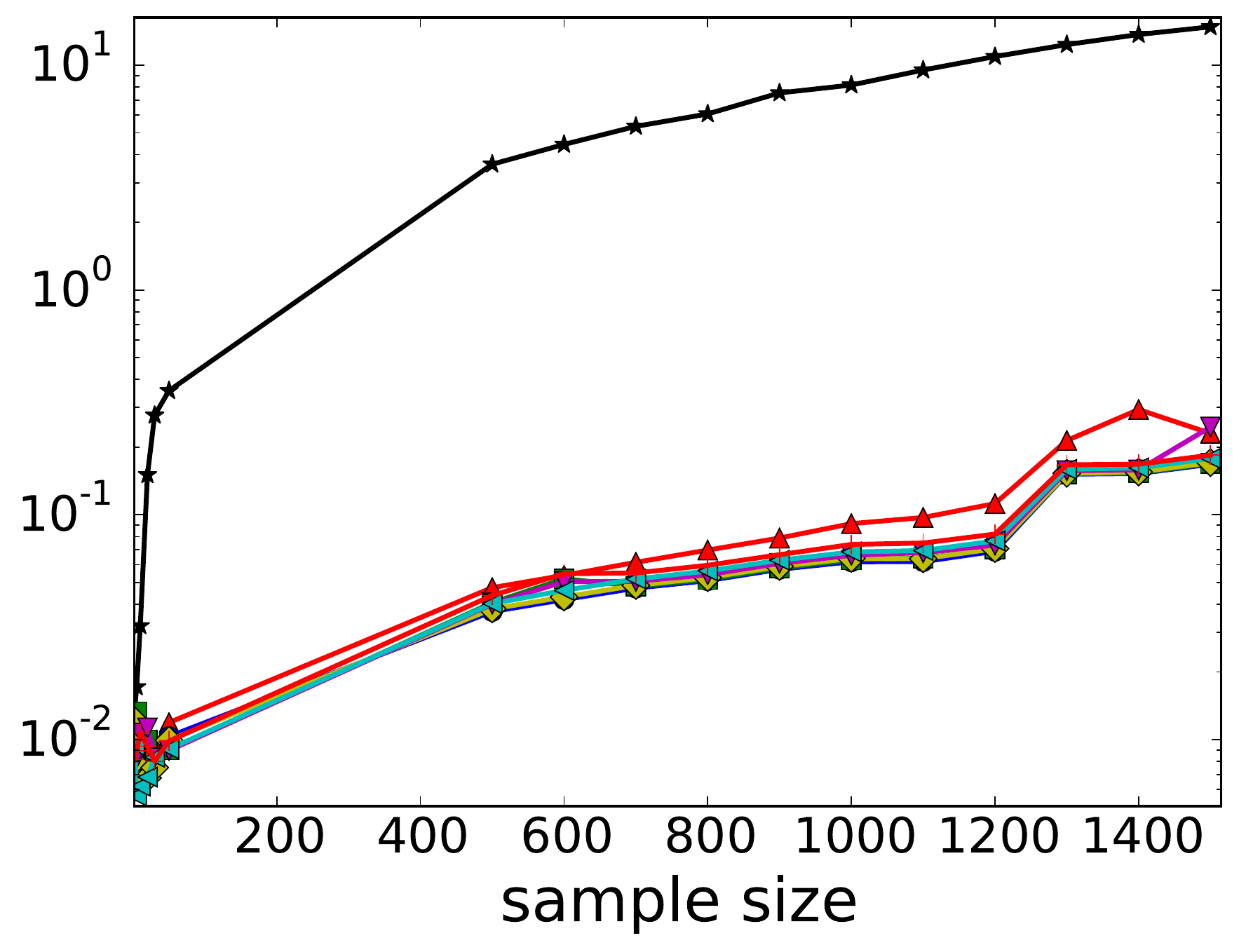}
%

\includegraphics[width=3.3\figsize]{hlegend.pdf}

\caption{{\small Results for \textsc{Forest} dataset.  
Row 1: \s{Kernel Frobenius Error} (left), \s{Kernel Spectral Error} (middle) and \textsc{Train Time} (right) vs{.} \textsc{Sample size}. 
Row 2: \s{Kernel Frobenius Error} (left), \s{Kernel Spectral Error} (middle) vs{.} \textsc{Space}, and \textsc{Test Time} vs{.} \textsc{Sample size} (right) }}

\label{fig:forest}
\end{figure*}

\paragraph{Error measures.}
We consider two error measures comparing the true gram matrix $G$ and an approximated gram matrix (constructed in various ways).  
The first error measure is \s{Kernel Spectral Error} $= \|G - G'\|_2/n$ which represents the worst case error.  
The second is \s{Kernel Frobenius Error} $= \|G - G'\|_F/n^2$ that represents the global error.  
We normalized the error measures by $1/n$ and $1/n^2$, respectively, so they are comparable across data sets. 
These measures require another pass on the data to compute, but give a more holistic view of how accurate our approaches are.

We measure the \textsc{Space} requirements of each algorithm as following: 
\SKPCA sketch has space $md+m\ell$, \Nyst is $c^2 + cd$, and RNCA is $m^2 + md$, where $m$ is the number of \RFM, and $c$ is the number of samples in \Nyst. 
In our experiments, we set $m$ and $c$ equally, calling these parameters \textsc{Sample Size}.  
Note that \textsc{Sample Size} and \textsc{Space} usage are different: both RNCA and Nystr\"om have \textsc{Space} quadratic in \textsc{Sample Size}, while for SKPCA it is linear.

\paragraph{Results.}
Figures \ref{fig:random_noisy}, \ref{fig:cpu}, \ref{fig:adult} and \ref{fig:forest} show log-log plots of results for \textsc{Random Noisy}, \textsc{CPU},  \textsc{Adult} and \textsc{Forest} datasets, respectively.    


For small \textsc{Sample Size} we observe that \Nyst performs quite well under all error measures, corroborating results reported by Lopez \etal~\cite{lopez2014randomized}.  However, all methods have a \emph{very small error range}, typically less than $0.01$.  
For \s{Kernel Frobenius Error} we typically observe a cross-over point where RNCA and often most versions of SKPCA have better error for that size.  
Under \s{Kernel Spectral Error} we often see a cross-over point for SKPCA, but not for RNCA.  
We suspect that this is related to how FD only maintains the most dominate directions while ignoring other (potentially spurious) directions introduced by the \RFM. 
In general, SKPCA has as good or better error than RNCA for the same size, with smaller size being required with smaller $\ell$ values.  This difference is more pronounced in \textsc{Space} than \textsc{Sample Size}, where our theoretical results expect a polynomial advantage.

In timing experiments, especially \textsc{Train time} we see \SKPCA has a very large advantage.  
As a function of \textsc{Sample Size} RNCA is the slowest for \textsc{Train time}, and \Nyst is the slowest for \textsc{Test time} by several orders of magnitude.  In both cases all versions of SKPCA are among the fastest algorithms.  
For the \textsc{Train time} results, RNCA's slow time is dominated by summing $n$ outer products, of dimensions $m \times m$.  This is avoided in SKPCA by only keeping the top $\ell$ dimensions, and only requiring similar computation on the order of $\ell \times m$, where typically $\ell \ll m$.  
\Nyst only updates the $c \times c$ gram matrix when a new point replaces an old one,  expected $c \log n$ times.  

\Nyst is comparatively very slow in \textsc{Test time}.   It
computes a new row and column of the gram matrix, and projects onto this space, taking $O(cd + c^2)$ time.  
Both RNCA and SKPCA avoid this by directly computing an $m$ dimensional representation of a test data point in $O(dm)$ time.  
Recall we precompute the eigen-structure for RNCA and \Nyst, whereas SKPCA maintains it at all times, so if this step were counted, SKPCA's advantage here would be even larger.


\paragraph{Summary.}
Our proposed method SKPCA has superior timing and error results to RNCA, by sketching in the kernel feature space.  Its error is typically a bit worse than a \Nyst approach, but the difference is quite small, and SKPCA is far superior to \Nyst in \textsc{Test time}, needed for any data analysis.


\newpage

\subsubsection*{References}

\begingroup

\renewcommand{\section}[2]{}%
\bibliographystyle{plain}
\bibliography{kpca}

\begin{thebibliography}{10}

\bibitem{ANW14}
Haim Avron, Huy~L. Nguyen, and David~P. Woodruff.
\newblock Subspace embeddings for the polynomial kernel.
\newblock In {\em NIPS}, 2014.

\bibitem{BMD09}
Christos Boutsidis, Michael~W. Mahoney, and Petros Drineas.
\newblock An improved approximation algorithm for the column subset selection
  problem.
\newblock In {\em Proceedings of 20th ACM-SIAM Symposium on Discrete
  Algorithms}, 2009.

\bibitem{clarkson2013low}
Kenneth~L Clarkson and David~P Woodruff.
\newblock Low rank approximation and regression in input sparsity time.
\newblock In {\em Proceedings of the 45th Annual ACM symposium on Theory of
  computing}, 2013.

\bibitem{drineas2006fast2}
Petros Drineas, Ravi Kannan, and Michael~W Mahoney.
\newblock Fast monte carlo algorithms for matrices ii: Computing a low-rank
  approximation to a matrix.
\newblock {\em SIAM Journal on Computing}, 36(1):158--183, 2006.

\bibitem{drineas2005nystrom}
Petros Drineas and Michael~W Mahoney.
\newblock On the nystr{\"o}m method for approximating a gram matrix for
  improved kernel-based learning.
\newblock {\em The Journal of Machine Learning Research}, 6:2153--2175, 2005.

\bibitem{GDP14}
Mina Ghashami, Amey Desai, and Jeff~M. Phillips.
\newblock Improved practical matrix sketching with guarantees.
\newblock In {\em Proceedings 22nd Annual European Symposium on Algorithms},
  2014.

\bibitem{gp14}
Mina Ghashami and Jeff~M. Phillips.
\newblock Relative errors for deterministic low-rank matrix approximations.
\newblock In {\em SODA}, pages 707--717, 2014.

\bibitem{gittens2013revisiting}
Alex Gittens and Michael~W Mahoney.
\newblock Revisiting the nystrom method for improved large-scale machine
  learning.
\newblock {\em arXiv preprint arXiv:1303.1849}, 2013.

\bibitem{hall1998incremental}
Peter~M Hall, A~David Marshall, and Ralph~R Martin.
\newblock Incremental eigenanalysis for classification.
\newblock In {\em BMVC}, volume~98, pages 286--295, 1998.

\bibitem{hamid2013compact}
Raffay Hamid, Ying Xiao, Alex Gittens, and Dennis DeCoste.
\newblock Compact random feature maps.
\newblock {\em arXiv preprint arXiv:1312.4626}, 2013.

\bibitem{jolliffe2005principal}
Ian Jolliffe.
\newblock {\em Principal component analysis}.
\newblock Wiley Online Library, 2005.

\bibitem{kar2012random}
Purushottam Kar and Harish Karnick.
\newblock Random feature maps for dot product kernels.
\newblock {\em arXiv preprint arXiv:1201.6530}, 2012.

\bibitem{kimura2005incremental}
Shosuke Kimura, Seiichi Ozawa, and Shigeo Abe.
\newblock Incremental kernel pca for online learning of feature space.
\newblock In {\em Computational Intelligence for Modelling, Control and
  Automation, 2005 and International Conference on Intelligent Agents, Web
  Technologies and Internet Commerce, International Conference on}, volume~1,
  pages 595--600. IEEE, 2005.

\bibitem{kumar2012sampling}
Sanjiv Kumar, Mehryar Mohri, and Ameet Talwalkar.
\newblock Sampling methods for the nystr{\"o}m method.
\newblock {\em The Journal of Machine Learning Research}, 13(1):981--1006,
  2012.

\bibitem{kuzmin2007online}
Dima Kuzmin and Manfred~K Warmuth.
\newblock Online kernel pca with entropic matrix updates.
\newblock In {\em Proceedings of the 24th international conference on Machine
  learning}, pages 465--472. ACM, 2007.

\bibitem{le2013fastfood}
Quoc Le, Tam{\'a}s Sarl{\'o}s, and Alex Smola.
\newblock Fastfood-approximating kernel expansions in loglinear time.
\newblock In {\em Proceedings of the international conference on machine
  learning}, 2013.

\bibitem{li2010random}
Fuxin Li, Catalin Ionescu, and Cristian Sminchisescu.
\newblock Random fourier approximations for skewed multiplicative histogram
  kernels.
\newblock In {\em Pattern Recognition}, pages 262--271. Springer, 2010.

\bibitem{l13}
Edo Liberty.
\newblock Simple and deterministic matrix sketching.
\newblock In {\em KDD}, pages 581--588, 2013.

\bibitem{lopez2014randomized}
David Lopez-Paz, Suvrit Sra, Alex Smola, Zoubin Ghahramani, and Bernhard
  Sch{\"o}lkopf.
\newblock Randomized nonlinear component analysis.
\newblock {\em arXiv preprint arXiv:1402.0119}, 2014.

\bibitem{Mah11}
Michael~W. Mahoney.
\newblock Randomized algorithms for matrices and data.
\newblock {\em Foundations and Trends in Machine Learning}, 3, 2011.

\bibitem{maji2009max}
Subhransu Maji and Alexander~C Berg.
\newblock Max-margin additive classifiers for detection.
\newblock In {\em Computer Vision, 2009 IEEE 12th International Conference on},
  pages 40--47. IEEE, 2009.

\bibitem{rahimi2007random}
Ali Rahimi and Benjamin Recht.
\newblock Random features for large-scale kernel machines.
\newblock In {\em Advances in neural information processing systems}, pages
  1177--1184, 2007.

\bibitem{s06}
Tam{\'a}s Sarl{\'o}s.
\newblock Improved approximation algorithms for large matrices via random
  projections.
\newblock In {\em FOCS}, pages 143--152, 2006.

\bibitem{scholkopf1997kernel}
Bernhard Sch{\"o}lkopf, Alexander Smola, and Klaus-Robert M{\"u}ller.
\newblock Kernel principal component analysis.
\newblock In {\em Artificial Neural Networks---ICANN'97}, pages 583--588.
  Springer, 1997.

\bibitem{talwalkar2010matrix}
Ameet Talwalkar and Afshin Rostamizadeh.
\newblock Matrix coherence and the nystrom method.
\newblock {\em arXiv preprint arXiv:1004.2008}, 2010.

\bibitem{williams2001using}
Christopher Williams and Matthias Seeger.
\newblock Using the nystr{\"o}m method to speed up kernel machines.
\newblock In {\em Proceedings of the 14th Annual Conference on Neural
  Information Processing Systems}, number EPFL-CONF-161322, pages 682--688,
  2001.

\bibitem{Woo14}
David~P. Woodruff.
\newblock Sketching as a tool for numerical linear algebra.
\newblock {\em Foundations and Trends in Theoretical Computer Science},
  10:1--157, 2014.

\end{thebibliography}

\endgroup


\end{document}